\renewcommand{\orcidID}[1]{}
\renewcommand{\orcidID}[1]{}
\author{Michael Roberts\inst{1,2}\orcidID{0000-0003-3155-3516}
\and Alexei Kopylov\inst{1}\orcidID{0000-0001-8943-4435}
\and Aleksey Nogin\inst{1}\orcidID{0000-0002-0795-3694}}
\institute{
HRL Laboratories, LLC, Malibu, CA\\
\email{\{mroberts, akopylov, anogin\}@hrl.com}\\ 
\url{https://csrs.hrl.com/}
\and
Cornell University, Ithaca, NY
}
\title{Semantics and Axiomatization for Stochastic Differential Dynamic Logic}
\begin{document}

\newcommand{\ang}[1]{\langle #1 \rangle}
\newcommand{\Rbot}{\mathbb{R}_\bot}
\newcommand{\Val}{\textnormal{Val}}
\newcommand{\boldx}{\mathbf{x}}
\newcommand{\Rstate}{\hat{\mathbb{R}_\bot^V}}
\newcommand{\sem}[1]{\llbracket #1 \rrbracket}
\newcommand{\ifthenelse}[3]{\textnormal{if }#1 \textnormal{ then }#2 \textnormal{ else }#3}
\newcommand{\letin}[1]{\textnormal{ let } #1 \textnormal{ in }}
\newcommand{\suref}{\textnormal{sure}}
\newcommand{\eqf}{\textnormal{eq}}
\newcommand{\deff}{\textnormal{def}}
\newcommand{\crashf}{\textnormal{crash}}
\newcommand{\tail}{\textnormal{tail}}
\newcommand{\Can}{\mathit{Can}}

\newcommand{\progskip}{\textbf{skip}}
\newcommand{\progfail}{\textbf{fail}}
\newcommand{\xsubsubsection}[1]{\vspace{-1.2\baselineskip}\subsubsection*{{#1}}}

\renewcommand{\doi}[1]{DOI: \href{https://doi.org/#1}{\detokenize{#1}}}

\renewcommand\labelitemii{\normalfont\bfseries --}
\renewcommand\labelitemi{$\bullet$}

\maketitle

\begin{abstract}
Building on previous work by Andr{\'e} Platzer, we present a formal language for Stochastic Differential Dynamic Logic, and define its semantics, axioms and inference rules. Compared to the previous effort, our account of the Stochastic Differential Dynamic Logic follows closer to and is more compatible with the traditional account of the regular Differential Dynamic Logic. We resolve an issue with the well-definedness of the original work's semantics, while showing how to make the logic more expressive by incorporating nondeterministic choice, definite descriptions and differential terms. Definite descriptions necessitate using a three-valued truth semantics. We also give the first Uniform Substitution calculus for Stochastic Differential Dynamic Logic, making it more practical to implement in proof assistants.
\end{abstract}

\keywords{
Stochastic reasoning \and
Dynamic logic \and
Proof calculus \and
Hybrid systems \and
Theorem proving
}

\section{Introduction}
It is well known that safety of complex hybrid systems, such as cyber-physical systems (whether autonomous or not), cannot be achieved with just simulation and testing~\cite{DBLP:conf/rss/MitschGP13,DBLP:conf/fm/LoosPN11}. The space of possible behaviors is so big that testing and simulation cannot provide sufficient coverage. Achieving high confidence in correctness requires the ability to model the system mathematically and to prove its properties with an aid of an automated reasoning system. Moreover, cyber-physical systems operate in uncertain environments and even modeling such system is a nontrivial task. Thus, we need a system that is able to reason about properties that incorporate such uncertainties. 

Differential Dynamic Logic (dDL) has proven a useful tool for certifying hybrid systems~\cite{Pla17,Pla18}, with a practical implementation in the KeYmaera theorem prover\cite{PQ08}. This is a logic in the style of Propositional Dynamic Logic \cite{PDL}, with the addition of programs $x' = \theta dt ~\&~H$ that allow the state to evolve continuously according to a differential equation $x'=\theta$ for some non-deterministic amount of time, as long as boundary condition $H$ is satisfied. Part of the reason that dDL has been successful in practice is the substantial amount of work done on its theory since it was first proposed. Notably, uniform substitution-based reasoning~\cite{Pla17} allowed a more concise axiomatization of dDL and enabled the move from KeYmaera to KeYmaera~X~\cite{FMQ+15} with a much smaller trusted code base. The same paper introduced differential forms to the calculus, a syntactic way of reasoning about the derivatives of the continuous dynamics, instead of moving them into side-conditions. 
More recently, the introduction of definite descriptions in $dL_i$~\cite{dLi} allowed for reasoning about terms of the form ``the unique $x$ such that $P$''.  This provides a way to reason about terms that may not be defined everywhere, but are necessary in practice, such as square roots. 

dDL and its simplest probabilistic extensions can only reason about those systems whose continuous behavior is fully deterministic. However, many hybrid systems are best modeled using continuous stochastic processes. This may be because they are deployed in a setting where the underlying dynamics are stochastic, such as in processes interacting with physical materials with stochastic properties, or with financial markets --- or because they represent a controller acting under measurement uncertainty. Reasoning about such systems in a dDL style was formulated in Stochastic Differential Dynamic Logic~\cite{Pla11,PWZZ15}. Here, the continuous programs are generalized to stochastic differential equations of the form $x' = \theta dt + \sigma dW$, expressing the change of $x$ in time as depending on not only $\theta$, but also on $\sigma$ and some underlying continuous stochastic process $W$.

In this work, we seek to similarly develop the foundational theory required for a practical implementation of stochastic differential dynamic logic, by introducing a formalization with definite descriptions and differential forms in the uniform substitution style. In addition, unlike the original sDL~\cite{Pla11,Pla11tech_report}, we also allow for programs with true non-determinism, which is important for reasoning about hybrid systems whose design is not fully specified. 

Defining semantics for stochastic differential dynamic logic is a non-trivial task. In fact, as we point out in Section~\ref{sec:sdl-error},
there appears to be an error in the proof that the original sDL semantics is well-defined, in the step establishing that the semantics of formulas is always measurable.
To resolve this, our semantics differs from those of most dDL--style logics in that for a given interpretation of our language, the ``continuous'' programs $x' = \theta ~\&~H$ can only terminate at pre-chosen stopping times, rather than non-deterministically at any point while $H$ is true. A formula that depends on such programs is then judged valid only if there exists a pre-chosen set of stop times that validate it, recovering the non-determinism.

Our semantics differs from those of $dL_i$ \cite{dLi} in another key way: we define formulas to be indeterminate in the case of program failure, and interpret program modalities to quantify over failures, while $dL_i$ ignores them. So we would evaluate $[x:=1 \cup \progfail]x=1$ to be indeterminate, while the original $dL_i$ would evaluate it to true.  By formulating a nondeterministic controller as a nondeterministic guarded choice operator which would $\progfail$ when all guards are simultaneously false, we avoid the common challenge encountered by KeYmaera X users (particularly novices), where it is easy to accidentally state and prove a safety lemma that is vacuously true --- that is, true not because the controller would always keep the system safe, but because the controller definition accidentally excludes the unsafe trajectories from consideration.

This paper is structured as follows. We first briefly review definite descriptions, stochastic processes and stochastic differential equations in Section~\ref{sec:feat}. Next we present the syntax of our SDL formulation, with a brief outline of the intended semantics in Section~\ref{sec:syntax}. We then present the full formal semantics in Section~\ref{sec:semantics}. We define the semantics of our validity judgment, and present some proof rules in Section~\ref{sec:pathwise}. We present an axiomatization of our SDL in Section~\ref{sec:axioms}. We extend our validity judgment to statements about probabilities, and present proof rules and axioms for probabilistic statements in Section~\ref{sec:distribution}. We then outline a uniform substitution calculus for SDL in Section~\ref{sec:us}. We conclude and discuss the next steps in Section~\ref{sec:conclusion}. The proofs of correctness are presented in Appendix~\ref{sec:appendix}.

\section{Background}
\label{sec:feat}
\subsection{Definite Descriptions and Three-Valued Logic}
\textit{Definite descriptions} \cite{dLi} provides $dDL$ with a means of reasoning about non-polynomial functions, and those which may not be everywhere-defined. Here we review that construction, modifying the syntax slightly in our presentation, to reflect the fact that since our semantics will be singleton-valued, we must specify which element of a vector we select instead of returning the whole vector. A definite description is a term $\iota i \phi_d$, where $\phi_d$ is some logical formula that depends on $d$ special variables $\diamond_n$ for $n\in [1..d],~i\in[1..d]$. If there exists a unique assignment to the variables $\diamond_n$, then $\iota i \phi_d$ should have the value of $\diamond_i$ in that assignment. Otherwise, it should fail to denote, or evaluate to some special symbol for failure. For example, $\iota 1 \  {\diamond_1} * {\diamond_1} = y \wedge \diamond_1 \geq 0$ encodes the positive square root of $y$. \cite{dLi} shows encodings for various other useful functions, including trigonometrics and absolute values.

As \cite{dLi} points out, the possibility that terms fail to denote suggests that we should use a three-valued truth semantics. For instance, consider $[y = -1](\iota 1 \ {\diamond_1} * {\diamond_1} = y \wedge \diamond_1 \geq 0 )< 1$, ``the positive square root of -1 is less than 1''. It is clearly not ``true'', but neither do we wish to judge it ``false'' and say that its negation is ``true''. 
To that end, as in \cite{dLi} we adopt a three-valued   \L{}ukasiewicz logic~\cite{lukasiewicz}, with truth-values $\mathcal{L}:= \{\oplus,\oslash,\ominus\}$ corresponding to ``true, indeterminate, false''. This set is ordered $\ominus < \oslash < \oplus$. We define the negation operator to exchange ``true'' and ``false'', but not act on ``indeterminate'': $\bar{\oplus}:=\ominus,~\bar{\oslash}:=\oslash,~\bar{\ominus}:=\oplus$. We define conjunction as a binary operator on $\mathcal{L}$ that returns $\oplus$ only when both arguments are $\oplus$, $\ominus$ if at least one is $\ominus$, and $\oslash$ otherwise. That is, $a \wedge b = \min (a,b)$. 

\subsection{Stochastic Differential Equations}
We give only a bare-bones background here; for a longer exposition in the context of sDL please see \cite{Pla11}, or \cite{Oks92} for an in-depth source.

For some sample space $\Omega$ and set of times $T$, and some $n$, a stochastic process is a map $X: T \times \Omega \rightarrow \mathbb{R}^n$ such that for all $t\in T$, the partial application $X(t)$ is a random variable. We call the other partial application $X(\omega)$ the {\em path} of $\omega$. A particularly well-studied example is Brownian motion $B$. A version of Brownian Motion is a stochastic process with continuous paths, such that its increments $B(t+\delta)-B(t)$  do not depend on $t$, have mean value zero, and are independent for disjoint periods of time.

This lends itself to modeling noisy dynamics. Consider a variable $x$ changing in time, with rate of change depending on $x$ as well as on some noise. Then we can model the noise on path $\omega$ by $\frac{d}{dt}B(\omega)$. Renaming $B$ to $W$\footnote{for Wiener process, and for consistency with the notation of \cite{Pla11}} and letting $x$ itself be a random variable, we can write $dx = b(x)dt + \sigma(x)dW$, with the understanding that this must hold along every path. We may generalize this to the multidimensional case $d\mathbf{x}=\mathbf{b}dt + \mathbf{\sigma }dW$ for $d\mathbf{x}$ a vector of variables, $W$ a vector of versions of Brownian motion, and $\mathbf{b}$ and $\sigma$ respectively a vector and a matrix of appropriate dimension. Solutions to these kinds of stochastic differential equations are given by the It\^{o} integral.

\section{Syntax}
\label{sec:syntax}
We will develop a dynamic logic for reasoning about first-order programs with non-determinism, probabilistic choice, stochastic differential equations, and definite descriptions. Here we present the syntax, broken into three mutually-defined kinds: terms, representing (attempts at) arithmetic computation in a particular program state; programs, representing state transitions; and formulas that represent truth-value assertions about terms and programs. 

Later in section \ref{sec:us}, we will give a rule for instantiating ``universally quantified'' axioms of our logic to particular cases. Because we do not want the added complexity of quantification over terms, programs, or formulas, our language will include symbols for elements of these kinds whose meaning is open to interpretation, so that we can reason with syntactic substitutions.
\subsection{Terms}

\begin{equation*}
\theta, \kappa ::=~~
	c \mid
	x \mid
	\theta* \kappa \mid
	\theta + \kappa \mid 
	f_d(\theta_1,\theta_2...\theta_d) \mid
	d_t (\theta) \mid
	d_{B,x} (\theta) \mid
	\iota i \phi_d \mid
\end{equation*}
In the above, $c$ is a constant, $x$ a variable, and $f_d$ a symbol for a function of arity $d$,
$\iota i \phi_d$ represents a definite description, where $d$ is a positive integer, $i\in [1..d]$, and $\phi_d$ is a formula with no program or formula symbols as subexpressions, and containing $d$ special variable symbols that are not used in any other context, denoted  $\diamond^{[\phi_d],n}$ for $n \in [1..d]$ and an abstract marker $[\phi_d]$. Call the set of such special variables appearing in $\phi_d$, $\diamond^{\phi_d}$.

Variables come from a countable set $V$ that is closed under sub-scripting by $t$ or $B,x$, where $t$ and $B$ are literals, while $x$ is an arbitrary variable. The term $d_t (\theta)$ is a differential that expresses the rate of change of $\theta$ with respect to time. $d_{B,x}$ is similar, but expresses the rate of change relative to a Brownian motion that is associated to variable $x$. 

\subsection{Stochastic Hybrid Programs}
\begin{equation*}
\begin{split}
\alpha, \beta ::=~&
	x_i:=\theta \mid
	x_i:=* \mid
	d\mathbf{x}=\mathbf{b}dt + \mathbf{\sigma }dW ~\&~ H \mid
	\ifthenelse{H}{\alpha}{\beta} \mid
	\alpha \cup \beta \\
	\mid~&
	\alpha ; \beta \mid
	\alpha^* \mid 
	\progskip \mid
	\progfail \mid 
	\gamma
\end{split}
\end{equation*}
For H a formula containing no program as a subexpression, $\gamma$ a program symbol.
$\mathbf{x}$ is a vector of variables, and $\mathbf{b},\mathbf{\sigma}$ respectively a vector and a square matrix of terms of corresponding dimension. $d\mathbf{x}=\mathbf{b}dt + \mathbf{\sigma }dW ~\&~ H $ evolves the system according to the expressed stochastic differential equation for some length of time, as long as $H$ remains true. $x_i:=*$ draws a new value for $x_i$ uniformly from [0,1].  $ \alpha \cup \beta$ represents the nondeterministic choice between $\alpha$ and $\beta$ (note: in some accounts of dDL, notation $\alpha\mid\beta$ is used instead), $\alpha ; \beta$ represents the sequential execution of $\alpha$ and $\beta$, $\alpha^*$ represents an arbitrary (nondeterministic) number of repetitions of program $\alpha$, $\progskip$ is an empty ``no-op'' program, and $\progfail$ is a ``failure'' program --- once executed, it causes everything to become indeterminate.
\subsection{Formulas}
\begin{equation*}
\phi, \psi ::=~~ 
\theta \geq \kappa \mid
\lnot \phi \mid
\phi \wedge \psi \mid
p_d(\theta_1,\theta_2,...\theta_d)\mid
	\ang{\alpha} \phi \mid 
\suref(\phi)
\end{equation*}

Following the above pattern, $p_d$ is a formula symbol of arity $d$. 
$\suref(\phi)$ is true when $\phi$ is true, and false otherwise (that is, if $\phi$ is false or indeterminate). The program modality $\ang{\alpha} \phi$ gives the maximum value of $\phi$ that could be achieved after running $\alpha$. 

As is standard, we will use $\phi \vee \psi$ as syntactic sugar for $\lnot (\lnot \phi \wedge \lnot \psi)$ and $[\alpha]\phi$ for $\lnot \ang{\alpha} \lnot \phi$. Additionally, we use $\text{ind}(\phi)$ for $\lnot \suref (\phi) \wedge \lnot \suref (\lnot \phi)$. Then we define $\phi_1 \rightarrow \phi_2$ as $\lnot \phi_1 \vee \phi_2 \vee (\text{ind}(\phi_1)\wedge \text{ind}(\phi_2))$, and $\phi_1 \leftrightarrow \phi_2$ as $(\phi_1 \rightarrow \phi_2) \wedge ( \phi_2 \rightarrow \phi_2)$.  
\section{Denotational Semantics}
\label{sec:semantics}
In order to account for partial definitions, we take $\Rbot := \mathbb{R} \cup \bot$ with $\bot$ standing in for ``undefined'', so that $\bot$ added to or multiplied by anything is $\bot$. We consider $\Rbot$ to have the ``extended topology'' generated by the open sets of $\mathbb{R}$ and the singleton $\{\bot\}$, and take the Borel sigma algebra on this topology. 

At any point in a program, only finitely many variables will have been used. Let $\Rstate$ be the set of maps from $V$ to $\Rbot$ that are $\bot$ in all but finitely many places. Alternatively, we may be at a point where the program  has crashed, $\triangledown$, or where it has gone past the bounds of its differential equation, $\triangle$. Call the set of valuations $\Val := \Rstate \cup \{\triangledown,\triangle \}$

As in \cite{Pla11}, we will fix a canonical sample space $\Omega$ and a sigma algebra $\mathcal{F}$ on it. We endow it with a probability measure $\mathcal{P}$ and a family of IID uniform random variables $U:\Omega \rightarrow [0,1]$, as well as a filtration and a family of Brownian motions so that we may interpret It\^{o} integrals \cite{Oks92}. Let a \textit{state} $z$ be a random variable on $\Val$;  $z: ~\Omega \rightarrow \Val$. Let $\mathcal{Z}$ be the set of states. 
 
We give a semantics of non-deterministic choice as maps from adversarial choice sequences to outcomes. These choice sequences are  binary streams that are co-finitely 0; we call this set $\mathcal{C}$ and take the usual tail and head functions. Let $\tail(n,C):=\tail^n(C)$. We will often use a choice sequence to select a natural number, so define $nat(C):\mathcal{C} \rightarrow \mathbb{N},~C\mapsto min \{ p|C(p)=0 \} )$ where $C(p)$ is $p$-th element of $C$, starting the count at $0$.

As pointed out above, our language contains symbols for terms, programs, and formulas that are open to interpretation; our semantics will take as an argument an interpretation $I$ that assigns meanings to these symbols. That is, $If_d: \Rbot^d \rightarrow \Rbot,~Ip_d: \Rbot^d \times \Omega \rightarrow \mathcal{L}$, such that $If_d$ and $Ip_d$ are measurable. $I\gamma:\Val \rightarrow   \Omega \rightarrow \mathcal{C}\rightarrow \Val \times \mathcal{C}$ such that:
\begin{compactenum}
\item At every $v\in\Val$, $\omega\in\Omega$, and $C\in\mathcal{C}$, $\pi_2(I\gamma(v,\omega,C)) = \tail(n,C)$ for some $n$, and if $C'$ agrees with $C$ on the first $n$ elements, then $\pi_1(I\gamma(v,\omega,C')) = \pi_1(I\gamma(v,\omega,C))$ and  $\pi_2(I\gamma(v,\omega,C')) = \tail(n,C')$, and
\item For a state $z\in\Val$ and sequence $C\in\mathcal{C}$, the map $\lambda \omega.\pi_1(I\gamma(z(\omega),\omega,C))$ is measurable.
\end{compactenum}

These last conditions are needed for the semantics of programs to well-defined.

\xsubsubsection{Measurability and Stopping Times}\label{sec:sdl-error}
We would like to use our language to reason about the probabilities of formulas being satisfied. Therefore the intended semantics of formulas must be measurable functions. In order to prove this compositionally, we therefore require that terms and programs also have measurable semantics.
On the other hand, the intended semantics of the formula $\ang{d\mathbf{x}=\mathbf{b}dt + \mathbf{\sigma }dW ~\&~ H }\phi$ is the map $\Omega \rightarrow \mathcal{L}$ that tells us, at each $\omega \in \Omega$, the highest truth value we could achieve by stopping at some time before $H$ ceases to be true. In other words, this is the pointwise supremum of the semantics of $\phi$ over each state we obtain by stopping the program at some time $t$. Unfortunately the pointwise supremum of uncountably many measurable functions is not guaranteed to be measurable \cite{bichteler2002stochastic}.

The original sDL paper \cite{Pla11} claims to give measurable semantics, with proof given in \cite[Appendix A.2]{Pla11tech_report}. To prove memorability of semantics of $\ang{\alpha}\phi$
\footnote{Actually, that paper considers the program modality to apply to real-valued terms instead of formulas, with the semantics of $\sem{\alpha}f$ the supremum of the semantics of $f$ over stop times of $\alpha$. We present this modified to be consistent with our syntax, as it doesn't change nature of the proof.}
, it gives the semantics of $\alpha$ as a pathwise right-continuous process returning valuations: $\sem{\alpha}_t$ and attempts to use this to capture the value of $\sem{\phi}$ in state ${\sem{\alpha}_{t'}}$ for irrational times $t'$ with converging sequences of rational times. Thus they only need take suprema over countably many measurable functions at a time. However, as the semantics of $\phi$ along each path need not be continuous in $\Val$, the supremum over such a sequence is not in general equal to the supremum that considers the irrational time directly. Consider for instance $\ang{x=0;~x':=1 dt~\&~x\leq 3 }x*x=2$. The condition is true only when the program is stopped at exactly time $\sqrt{2}$.

To circumvent this, we directly define the semantics of $\ang{\alpha}\phi$ as returning the pointwise supremum over a countable set of stopping times. This set is a part of interpretation, that is, an interpretation $I$ defines a countable subset of $\mathbb{R}^+$, called $I_{times}$. As we will later show, we will consider a formula valid as as long as there exists some set of stopping times validating it --- recovering the intended semantics of non-determinism. 
We denote the set of possible interpretations $\mathcal{I}$.

\subsection{Definitions}
\subsubsection{Term Semantics}
We give term semantics as $ \sem{\theta}: \mathcal{I} \rightarrow \Rstate \rightarrow  \Rbot$, written $I\boldx\sem{\theta}$. Then we define the function $\hat{I \theta}:\Rstate \rightarrow \Rbot$ as $\lambda \mathbf{x}.I\boldx \sem{\theta}$. 
 This semantics is to be extended to $\mathcal{I} \rightarrow \Val \rightarrow \Rbot$ by defining $I\triangledown \sem{\theta}(\omega) = \bot = I\triangle \sem{\theta}(\omega) = \bot$ for all $\theta, \omega$. This further extends to  $\mathcal{I} \rightarrow \mathcal{Z} \rightarrow \Omega \rightarrow \Rbot$ as $Iz\sem{\theta}(\omega):= Iz(\omega)\sem{\theta}.$ 
The semantics of differential terms is suggested by the It\^{o} formula, see \cite[Chapter~4]{Oks92}.
$$\begin{array}{c}
\begin{array}{rl@{~~~~~~}rl}
I\boldx\sem{c}:= & c 	& I\boldx\sem{\theta * \kappa}:= & I\boldx\sem{\theta}*I\boldx\sem{\kappa} \\
I\boldx\sem{x}:= & \boldx (x) & I\boldx\sem{\theta + \kappa}:= &  I\boldx\sem{\theta}+I\boldx\sem{\kappa}\\
\end{array}\\
I\boldx\sem{f_d(\theta_1...\theta_d)}:=  If_d \left( I\boldx\sem{\theta_1}...I\boldx\sem{\theta_d}\right) 
\end{array}$$
For definite descriptions and differentials, we will first define candidate semantic functions $\Can(I,\theta):\Rstate \rightarrow \Rbot$. If the candidates are not measurable we will discard them in favor of the constant $\bot$ function. 
\begin{align*}
\Can(I,~\iota i \phi_d)(\mathbf{x}):= &
	\begin{cases}	
	\mathbf{y}(\diamond^{[\phi_d],i})& \exists! \mathbf{y}\in \Rstate .~  \boldx^{\diamond^{\phi_d}} = \mathbf{y}^{\diamond^{\phi_d}}, ~ \forall \omega.I\mathbf{y}\sem{\phi_d}(\omega)=\oplus \\
	\bot & \text{else}
	\end{cases} \\
\omit\rlap{\parbox{0.95\textwidth}{Since $\phi_d$ doesn't contain programs or formula symbols, its semantics is constant across $\Omega$.}} \\
\Can(I,~d_t (\theta)(\boldx):= & \left( (\sum_{x \in V}x_t \frac{\partial \hat{I \theta }}{\partial x}) + \frac{1}{2} \sum_{x,y \in V}\frac{\partial ^2 \hat{I \theta }}{\partial x \partial y}(\sum_{j\in V}x_{B,j}y_{B,j}) \right)(\boldx) \\
\Can(I,~d_{B,x}(\theta)(\boldx):= & \left( \sum_{y \in V} \frac{\partial \hat{I \theta }}{\partial y}y_{B,x} \right) (\boldx)
\end{align*} 
Where the last two definitions evaluate to $\bot$ when derivatives are undefined.

Now:
\begin{align*}
I\boldx\sem{\iota i \phi_d}:= &
	\begin{cases}	
	\Can(I,~\iota i \phi_d)(\mathbf{x}) & \Can(I,~\iota i \phi_d) \text{ is measurable}\\
	\bot & \text{else}
	\end{cases} \\
I\boldx\sem{\iota i \phi_d}:= &
	\begin{cases}	
	\Can(I,~d_t \theta)(\mathbf{x}) & \Can(I,~d_t \theta) \text{ is measurable}\\
	\bot & \text{else}
	\end{cases} \\
I\boldx\sem{d_{B,x}\theta}:= &
	\begin{cases}	
	\Can(I,~d_{B,x}\theta)(\mathbf{x}) & \Can(I,~d_{B,x}\theta) \text{ is measurable}\\
	\bot & \text{else}
	\end{cases} 
\end{align*} 
\xsubsubsection{Program Semantics}
$\sem{\alpha}:\mathcal{I} \rightarrow \Val \rightarrow   \Omega \rightarrow \mathcal{C}\rightarrow \Val \times \mathcal{C}$ , written $Iv\sem{\alpha}(\omega,C)$. Then we can extend this to $\sem{\alpha}:\mathcal{I} \rightarrow \mathcal{Z} \rightarrow   \Omega \rightarrow \mathcal{C}\rightarrow \Val \times \mathcal{C}$  by setting $Iz\sem{\alpha}(\omega,C):= I(z(\omega)\sem{\alpha}(\omega,C)$.
 For all $I,\alpha,\omega,C$, $I\triangledown\sem{\alpha}(\omega,C) := \triangledown,C$  and $I\triangle \sem{\alpha}(\omega,C) := \triangle,C$. Thus these cases are ignored below.
\begin{align*}
Iv\sem{x:=\theta}(\omega,C) := & \begin{cases} 
\triangledown, C & Iv\sem{\theta}(\omega) = \bot \\
v[Iv\sem{\theta}(\omega) / x],C & \text{else}
\end{cases}  \\
Iv \sem{x:=*}(\omega,C) := & v[ U(\omega)]/v], C\\&\text{for U a never-before-used random variable} \\
Iv\sem{d \mathbf{x} = b dt + s dW \& ~ H} (\omega,C):=&   \\
&\hspace*{-68pt}\begin{tabular}{l}
let  $\int^t = v[\mathbf{x} \mapsto v(\mathbf{x}) + \int_0^t \sem{b}(\omega) dt + \int_0^t \sem{s}(\omega) dW(\omega)$ \\
else $\triangledown$ if that is undefined because $b$  or $s$ are somewhere $\bot$ \\ 
else $\triangle $ if $I\int^t \sem{H}(\omega) \neq \oplus$ in: \\
$\letin{t = I_{times}(nat(C))}$ \\
$ \begin{cases}
\triangledown \text{ (resp }\triangle) & \exists t'. t'\leq t \wedge \int^{t'}=\triangledown \text{ (resp }\triangle) \\ 
\int^t & \text{else}  \\
\end{cases}, \tail(nat(C),C)$
\end{tabular}\\
Iv  \sem{\ifthenelse{H}{\alpha}{\beta}}(\omega,C) := & \begin{cases}
Iv \sem{\alpha}(\omega,C) & Iz\sem{H}(\omega) = \oplus \\
Iv \sem{\beta}(\omega,C) & Iz\sem{H}(\omega) = \ominus \\
\triangledown,C & Iz\sem{H}(\omega) = \oslash
\end{cases}\\  
Iv\sem{\alpha \cup \beta}(\omega,C):= & \begin{cases}
Iv\sem{\alpha}(\omega,\tail(C)) & \text{head}(C) = 0 \\
Iv\sem{\beta}(\omega,\tail(C)) & \text{head}(C) = 1
\end{cases} \\
Iv\sem{\alpha;\beta}(\omega,C) := & \text{let } (v_\alpha,C_\alpha) = Iv\sem{\alpha}(\omega,C) \text{ in }  I v_\alpha \sem {\beta}(\omega,C_\alpha) \\
Iv\sem{\alpha^*}(\omega,C) :=&  Iv\sem{\alpha^{nat(C)}}(\omega, {tail}(nat(C),C))  \\& \text{  where } \alpha^0 = \progskip \text{ and } \alpha^{k+1} = \alpha;\alpha^k. \\
Iv\sem{\gamma}(\omega,C) := &  (I\gamma)(v,\omega,C) \\ 
I v  \sem{\progskip}(\omega,C) :=& v,C  \\
I v \sem{\progfail}(\omega,C) :=& \triangledown, C
\end{align*}

\xsubsubsection{Formula Semantics}
$ \sem{\phi}: \mathcal{I} \rightarrow \Val \rightarrow \Omega \rightarrow \mathcal{L}$, written $Iv\sem{\phi}(\omega)$. Similar to above we can extend this to
$ \sem{\phi}: \mathcal{I} \rightarrow \mathcal{Z} \rightarrow \Omega \rightarrow \mathcal{L}$, written $Iz\sem{\phi}(\omega)$.
For all $\phi$, we define $I\triangledown \sem{\phi}(\omega) :=I\triangle \sem{\phi}(\omega) := \oslash$. The behavior at all other values is defined below:\vspace{-0.5\baselineskip}
\begin{align*}
Iv\sem{\theta \geq \kappa}(\omega):= & \text{let } a = Iv\sem{\theta}(\omega),~b = Iv\sem{\kappa}(\omega) \text{ in }
	\begin{cases}
	\oplus & a-b \geq 0 \\
	\ominus & a-b < 0 \\
	\oslash & a-b = \bot
	\end{cases} \\[-0.3\baselineskip]
Iv\sem{\lnot \phi}(\omega):= & \bar{Iv\sem{\phi}(\omega)} \\
Iv\sem{\phi \wedge \kappa}(\omega):= & \max (Iv\sem{\phi}(\omega), Iv\sem{\kappa}(\omega)) \\
Iv\sem{p_d(\theta_1...\theta_d)}(\omega) := & Ip_d (Iv\sem{\theta_1}..Iv\sem{\theta_d}))(\omega) \\
Iv\sem{\ang{\alpha}\phi}(\omega):=&  \text{let}v_C = \pi_1 Iv\sem{\alpha}(\omega,C) \text{ in } \sup_{C~st~v_C\neq \triangle} Iv_C \sem{\phi}(\omega) \\
Iv \sem{\suref (\phi)} := & \begin{cases} \oplus & Iv\sem{\phi}(\omega)= \oplus  \\ \ominus & \text{else} \end{cases} 
\end{align*}  

\subsection{Measurability Theorems}
The following theorems are proven in Appendix~\ref{sec:appendix}.

\begin{theorem}[Measurability of Term Semantics]\label{measure-term}
For any term $\phi$, any interpretation $I$, any state $z$, $Iz\sem{\phi}:\Omega \rightarrow \Rbot $ is a measurable function.
\end{theorem}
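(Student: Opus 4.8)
The plan is to separate the dependence on the sample point $\omega$ from the dependence on the program state. By the definitional clauses for term semantics, $Iz\sem{\theta}(\omega)$ depends on $\omega$ only through $z(\omega)$: if we write $g_{I,\theta}\colon \Val\to\Rbot$ for the function sending $\triangledown,\triangle\mapsto\bot$ and $\boldx\mapsto I\boldx\sem{\theta}$ for $\boldx\in\Rstate$, then $Iz\sem{\theta}=g_{I,\theta}\circ z$. Since a state is by definition a random variable $z\colon\Omega\to\Val$, and compositions of measurable maps are measurable, it suffices to show that $g_{I,\theta}$ is measurable. As $\{\triangledown\}$ and $\{\triangle\}$ are measurable in $\Val$ (they are isolated points of $\Val=\Rstate\cup\{\triangledown,\triangle\}$) and $g_{I,\theta}$ is constantly $\bot$ there, this reduces further to showing that $\hat{I\theta}=\lambda\boldx.\,I\boldx\sem{\theta}\colon\Rstate\to\Rbot$ is measurable for every interpretation $I\in\mathcal{I}$ and every term $\theta$. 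I would prove this by structural induction on $\theta$, where $\Rstate$ carries the subspace of the product (cylinder) $\sigma$-algebra induced from $\Rbot^V$, so that coordinate evaluations $\boldx\mapsto\boldx(x)$ are measurable.

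The base cases are immediate: $\hat{I c}$ is the constant $c$, and $\hat{I x}$ is a coordinate evaluation. For $\theta*\kappa$ and $\theta+\kappa$, the induction hypothesis makes $\boldx\mapsto(\hat{I\theta}(\boldx),\hat{I\kappa}(\boldx))$ measurable into $\Rbot\times\Rbot$, and one composes with the arithmetic operations, checking that $+,*\colon\Rbot\times\Rbot\to\Rbot$ are Borel for the extended topology --- they are continuous on $\mathbb{R}^2$, and each $\bot$-fibre is a finite union of coordinate slices, hence Borel. For $f_d(\theta_1,\dots,\theta_d)$, the induction hypothesis gives a measurable map into $\Rbot^d$ (here using that $\Rbot$ is second countable, so the Borel $\sigma$-algebra of the product topology coincides with the $d$-fold product $\sigma$-algebra on which $If_d$ is assumed measurable), and composing with the measurable $If_d$ finishes the case.

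The cases of definite descriptions $\iota i\phi_d$ and differentials $d_t(\theta)$, $d_{B,x}(\theta)$ are where the ``discard the candidate unless it is measurable'' device in the semantics does all the work: by its very definition $I\boldx\sem{\iota i\phi_d}$ equals $\Can(I,\iota i\phi_d)$ when that function is measurable and the constant $\bot$ otherwise, so $\hat{I(\iota i\phi_d)}$ is measurable in either case, and identically for $\hat{I(d_t\theta)}$ and $\hat{I(d_{B,x}\theta)}$. No appeal to the induction hypothesis is needed here beyond the fact that the candidate formulas denote well-defined functions on $\Rstate$. This completes the induction, and hence the theorem.

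The step I expect to demand the most care is the measure-theoretic bookkeeping that the above treats as routine: pinning down the $\sigma$-algebra on $\Rstate$ (and on $\Val$) so that coordinate evaluations and the singletons $\{\triangledown\},\{\triangle\}$ are measurable and $\Rstate$ itself is a measurable subset of $\Rbot^V$; verifying Borel measurability of $+$ and $*$ on the extended line $\Rbot$; and reconciling the Borel $\sigma$-algebra of the product topology with the product $\sigma$-algebra so that the measurability hypothesis on $If_d$ can be applied. None of this is deep, but these are the load-bearing facts; the ``interesting'' term constructors ($\iota$ and the differentials) are essentially free, by design of the semantics.
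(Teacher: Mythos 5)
Your proposal is correct and follows essentially the same route as the paper's own proof: reduce to measurability of $\hat{I\theta}$ and compose with the measurable state $z$, then do structural induction where constants/variables are immediate, sums/products and interpreted function symbols follow from the inductive hypothesis and the measurability assumption on $If_d$, and definite descriptions and differentials are measurable by the built-in ``discard if not measurable'' clause. Your extra measure-theoretic bookkeeping (the $\sigma$-algebra on $\Rstate$, Borel measurability of $+$ and $*$ on $\Rbot$) is a more careful spelling-out of steps the paper leaves implicit, not a different argument.
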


\begin{theorem}[Measurability of Formula Semantics]\label{measure-formula}
For any formula $\phi$, any interpretation $I$, any state $z$, $Iz\sem{\phi}: \Omega \rightarrow \mathcal{L} $ is a measurable function.
\end{theorem}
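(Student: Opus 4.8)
The plan is to prove Theorems~\ref{measure-term} and~\ref{measure-formula} together with an auxiliary \emph{program measurability lemma} by a single simultaneous structural induction over the mutually defined grammar of terms, programs and formulas. The program lemma one needs is: for every program $\alpha$, interpretation $I$, state $z$ and choice sequence $C\in\mathcal{C}$, the map $\omega\mapsto Iz\sem{\alpha}(\omega,C)\in\Val\times\mathcal{C}$ is measurable in each component, the base case $\alpha=\gamma$ being exactly condition~2 on $I\gamma$. The cross-dependencies are: the formula case $\ang{\alpha}\phi$ invokes the program lemma for $\alpha$ and the formula induction hypothesis for $\phi$; the program cases $\ifthenelse{H}{\alpha}{\beta}$ and $d\boldx=\mathbf{b}dt+\mathbf{\sigma}dW~\&~H$ invoke the formula/term hypothesis for $H$ together with Theorem~\ref{measure-term}; and the term case $\iota i\phi_d$ uses only the program-free, $\omega$-independent fragment of formula semantics, so the induction is well-founded. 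Throughout I fix a $\sigma$-algebra on $\Val$ in which $\{\triangledown\}$, $\{\triangle\}$ and $\Rstate$ (with its Borel structure from $\Rbot^V$) are measurable, and I use two facts that trivialize most formula cases: $\mathcal{L}=\{\oplus,\oslash,\ominus\}$ is finite, so a map into $\mathcal{L}$ is measurable iff the preimage of each up-set $\{\ell'\mid\ell'\ge\ell\}$ is measurable and this class is closed under countable suprema; and $\mathcal{C}$ is \emph{countable}, being the binary streams that are eventually $0$.

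For a formula $\phi$ and state $z$, first partition $\Omega$ into the measurable sets $z^{-1}(\triangledown)$, $z^{-1}(\triangle)$, $z^{-1}(\Rstate)$; on the first two $Iz\sem{\phi}$ is the constant $\oslash$, so it suffices to treat the last, i.e.\ to assume $z$ takes values in $\Rstate$. The connective and atomic cases are then routine compositions: $\bar{\cdot}$, $\max(\cdot,\cdot)$ and $\suref(\cdot)$ are fixed maps out of the finite spaces $\mathcal{L}$ and $\mathcal{L}^2$, hence measurable, so composing with the (hypothesised measurable) $Iz\sem{\phi}$ and $(Iz\sem{\phi},Iz\sem{\kappa})$ handles $\lnot\phi$, $\phi\wedge\kappa$ and $\suref(\phi)$. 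For $\theta\ge\kappa$, Theorem~\ref{measure-term} gives that $\omega\mapsto(Iz\sem{\theta}(\omega),Iz\sem{\kappa}(\omega))$ is measurable into $\Rbot^2$, and the comparison map $(a,b)\mapsto\oplus,\ominus,\oslash$ is Borel on $\Rbot^2$ since its three fibres $\{a,b\in\mathbb{R},\,a\ge b\}$, $\{a,b\in\mathbb{R},\,a<b\}$, $\{a=\bot\}\cup\{b=\bot\}$ are Borel. For $p_d(\theta_1,\dots,\theta_d)$, the map $\omega\mapsto(Iz\sem{\theta_1}(\omega),\dots,Iz\sem{\theta_d}(\omega),\omega)\in\Rbot^d\times\Omega$ is measurable (first $d$ coordinates by Theorem~\ref{measure-term}, last by the identity), and $Ip_d$ is measurable by the definition of an interpretation, so $Iz\sem{p_d(\theta_1,\dots,\theta_d)}$ is a composition of measurable maps.

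The substantive case is $\ang{\alpha}\phi$, and here countability of $\mathcal{C}$ is exactly what rescues the supremum that Section~\ref{sec:sdl-error} shows must be treated carefully. Fix $C\in\mathcal{C}$; by the program lemma $z_C:=\lambda\omega.\,\pi_1(Iz\sem{\alpha}(\omega,C))$ is a state. Define $g_C(\omega):=Iz_C\sem{\phi}(\omega)$ when $z_C(\omega)\ne\triangle$ and $g_C(\omega):=\ominus$ otherwise; since $z_C^{-1}(\triangle)$ is measurable and $Iz_C\sem{\phi}$ is measurable by the formula hypothesis applied to $\phi$ and the state $z_C$, $g_C$ is measurable (piecewise on a measurable partition). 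Interpreting a supremum over the empty index set as the least element $\ominus$, one checks pointwise that $Iz\sem{\ang{\alpha}\phi}(\omega)=\sup_{C\in\mathcal{C}}g_C(\omega)$: terms with $z_C(\omega)=\triangle$ contribute $\ominus$ and so never raise the supremum, and if every $C$ yields $\triangle$ both sides equal $\ominus$. This is a countable supremum of $\mathcal{L}$-valued measurable functions, so for each $\ell$ the set $\{\omega\mid Iz\sem{\ang{\alpha}\phi}(\omega)\ge\ell\}=\bigcup_{C\in\mathcal{C}}\{\omega\mid g_C(\omega)\ge\ell\}$ is measurable, closing the formula induction.

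The real work is in the program lemma, which I expect to be the main obstacle. The assignment, $\progskip$, $\progfail$, choice and $x:=*$ cases are easy ($U$ is measurable, coordinate updates are measurable); sequential composition and $\alpha^*$ need the observation that $\pi_2$ always outputs a member of the countable set $\mathcal{C}$, so one partitions $\Omega$ by the measurable, countably-valued map $\omega\mapsto\pi_2(Iz\sem{\alpha}(\omega,C))$ and glues countably many applications of the hypothesis; $\ifthenelse{H}{\alpha}{\beta}$ partitions $\Omega$ by the measurable value of $Iz\sem{H}$. The genuinely delicate case is the stochastic ODE: one must show that $t\mapsto\int^t$, built from It\^{o} integrals of the measurable term semantics of $\mathbf{b}$ and $\mathbf{\sigma}$, is jointly measurable; that the crash and out-of-bounds tests, in particular ``$\exists t'\le t.\,\int^{t'}=\triangledown$'', are measurable despite ranging over uncountably many $t'$ (using that such a crash originates from the integrand reaching $\bot$, detectable along, e.g., rational times or via right-continuity); and finally that pinning the stop time to the single deterministic value $t=I_{times}(nat(C))$ makes $\omega\mapsto\pi_1 Iz\sem{d\boldx=\mathbf{b}dt+\mathbf{\sigma}dW~\&~H}(\omega,C)$ measurable --- this last point being precisely the design choice (countable $I_{times}$) introduced to sidestep the gap in \cite{Pla11tech_report}.
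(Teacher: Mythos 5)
Your proposal is correct and follows essentially the same route as the paper: a simultaneous structural induction over terms, programs and formulas (your ``program lemma'' is the paper's Theorem~\ref{measure-program}), with the $\ang{\alpha}\phi$ case resolved by the countability of $\mathcal{C}$ turning the supremum into a countable supremum of measurable $\mathcal{L}$-valued functions. You are in fact more explicit than the paper about the edge cases --- the $\triangle$-excluded/possibly-empty supremum, and the crash test in the SDE case, which the paper dispatches with ``It\^{o} integrals are measurable by definition''.
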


\begin{theorem}[Measurability of Determinized Program Semantics]\label{measure-program}
For any formula $\alpha$, any interpretation $I$, any state $z$, and any choice sequence $C$, $Iz\sem{\alpha}(C): \Omega \rightarrow \Val $ is a state (a measurable function).
\end{theorem}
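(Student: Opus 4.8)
The plan is to prove Theorems~\ref{measure-term}, \ref{measure-formula} and~\ref{measure-program} \emph{simultaneously}, by induction on syntactic size, since terms, formulas and programs are mutually recursive ($\iota i\phi_d$ contains a formula, $\ang{\alpha}\phi$ contains a program, a program contains terms and a boundary formula $H$). Two easy observations carry a lot of the weight: the set $\mathcal{C}$ of co-finitely-zero binary streams is \emph{countable}, and $I_{times}$ is countable by fiat. For the program statement I would strengthen the claim to also record that, for each fixed $C$, the second component $\omega\mapsto\pi_2(Iz\sem{\alpha}(\omega,C))\in\mathcal{C}$ is measurable; this is needed because $\alpha\cup\beta$, $\alpha;\beta$ and $\alpha^*$ hand their sub-programs a choice sequence that depends on $\omega$. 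In every case one first splits $\Omega$ according to whether $z(\omega)\in\Rstate$, $z(\omega)=\triangledown$, or $z(\omega)=\triangle$; the last two give constant outputs, so it suffices to treat states valued in $\Rstate$.

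For the routine program cases: $x:=\theta$ splits $\Omega$ into $\{Iz\sem{\theta}=\bot\}$ (measurable by Theorem~\ref{measure-term}) and its complement, on which $\omega\mapsto z(\omega)[Iz\sem{\theta}(\omega)/x]$ is a combination of measurable maps; $x:=*$ uses that the fresh $U$ is measurable by construction of $\Omega$; $\progskip$ and $\progfail$ are (essentially) constant; $\ifthenelse{H}{\alpha}{\beta}$ partitions $\Omega$ into the three measurable sets where $Iz\sem{H}$ equals $\oplus$, $\ominus$, $\oslash$ (Theorem~\ref{measure-formula}) and applies the IH on each. For $\alpha;\beta$: by the strengthened IH for $\alpha$, $\omega\mapsto C_\alpha(\omega)$ is measurable into the countable $\mathcal{C}$, so $\Omega=\bigsqcup_{C'\in\mathcal{C}}\{C_\alpha=C'\}$ is a countable measurable partition, and on each block we invoke the IH for $\beta$ with the fixed sequence $C'$ and the measurable intermediate state $\omega\mapsto\pi_1(Iz\sem{\alpha}(\omega,C))$; $\alpha\cup\beta$ is similar but simpler. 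For $\alpha^*$: given $C$, the exponent $nat(C)$ and the tail are already determined, so the case reduces to $\sem{\alpha^{nat(C)}}$ at a fixed sequence; since $\alpha^k$ is \emph{not} a syntactic subterm of $\alpha^*$, we run an inner induction on $k$, with $\alpha^0=\progskip$ trivial and $\alpha^{k+1}=\alpha;\alpha^k$ handled exactly by the sequencing argument, using the outer IH for $\alpha$ (which \emph{is} smaller than $\alpha^*$) and the inner IH for $\alpha^k$.

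The term and formula cases are mostly bookkeeping. For terms, $c,x$ are trivial, $\theta*\kappa$, $\theta+\kappa$, $f_d(\dots)$ use measurability of the arithmetic on $\Rbot$ and of $If_d$, and for $\iota i\phi_d$ and the differentials the semantics is \emph{by definition} either the constant $\bot$ or $\Can(I,\cdot)$ composed with the measurable state $z$ --- measurable either way (one should also note that $\hat{I\theta}$ is Borel on $\Rstate$, by the same induction, since this is what the It\^{o} machinery below consumes). For formulas, $\theta\geq\kappa$ splits $\Omega$ by the sign and $\bot$-ness of $Iz\sem{\theta}-Iz\sem{\kappa}$; $\lnot$, $\wedge$, $\suref$ compose with maps on the finite set $\mathcal{L}$; $p_d$ composes the measurable term tuple with the measurable $Ip_d$. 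The one genuine formula case is $\ang{\alpha}\phi$, whose value is $\sup_{C\,:\,v_C\neq\triangle}Iv_C\sem{\phi}(\omega)$ with $v_C=\pi_1(Iz\sem{\alpha}(\omega,C))$: this is a supremum over the \emph{countable} $\mathcal{C}$, each term is measurable because $\omega\mapsto v_C(\omega)$ is a state by Theorem~\ref{measure-program} (fixed $C$) and then $Iv_C\sem{\phi}$ is measurable by the IH for $\phi$, the side condition $v_C\neq\triangle$ cuts out a measurable set, and a countable supremum of measurable $\mathcal{L}$-valued maps is measurable. This is precisely where the original sDL proof failed (an \emph{uncountable} supremum), and it is the determinized semantics together with countability of $\mathcal{C}$ that repairs it.

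The hard case, and the main obstacle, is the stochastic differential equation $d\boldx=b\,dt+\sigma\,dW\,\&\,H$ with $C$ fixed (hence $t:=I_{times}(nat(C))$ fixed). By Theorem~\ref{measure-term} the integrands are adapted and Borel, so standard It\^{o} theory \cite[Ch.~4]{Oks92} yields a jointly measurable version of the solution process $(t',\omega)\mapsto X_{t'}(\omega)$ with a.s.\ continuous paths; in particular $\omega\mapsto X_t(\omega)$ is a state, so $\omega\mapsto I X_t\sem{H}(\omega)$ is measurable by Theorem~\ref{measure-formula}. What remains --- and what I expect to be the real difficulty --- is measurability of the two ``escape'' events $\{\omega:\exists t'\leq t,\,X_{t'}(\omega)=\triangledown\}$ (the integrand is $\bot$ somewhere along the path) and $\{\omega:\exists t'\leq t,\,X_{t'}(\omega)=\triangle\}$ ($I X_{t'}\sem{H}\neq\oplus$ somewhere), each a priori an \emph{uncountable} union. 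The $\triangledown$ event is the gentler one, since undefinedness of $b$ or $\sigma$ propagates forward, so path continuity lets one detect it along a countable dense set of times; the $\triangle$ event is the crux and must be closed with care --- either because the boundary check is (or should be) restricted to the countable $I_{times}$, matching the paper's philosophy that continuous programs touch the clock only at pre-chosen times, or, if $t'$ truly ranges over all reals $\leq t$, by a path-regularity argument reducing to countably many $t'$ --- the same kind of gap the paper identifies in the prior work. Once these events are measurable, gluing the constant $\triangledown$, the constant $\triangle$, and the measurable state $X_t$ over the resulting measurable partition of $\Omega$ finishes the case and, with it, the simultaneous induction.
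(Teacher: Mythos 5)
Your proposal follows the paper's proof essentially exactly: a simultaneous structural induction with Theorems~\ref{measure-term} and~\ref{measure-formula}, exploiting the countability of $\mathcal{C}$ and of $I_{times}$ so that the supremum in $\ang{\alpha}\phi$ and the reductions of $\cup$, $;$ and ${}^*$ to fixed-sequence behavior all stay within countable operations. Where you diverge, it is only by being more explicit than the paper --- the strengthened induction hypothesis tracking measurability of the output choice sequence (needed because the intermediate sequence in $\alpha;\beta$ depends on $\omega$), and the measurability of the $\triangledown$/$\triangle$ escape events for SDEs --- both points the paper's own proof passes over with ``follows directly from the IH'' and ``It\^{o} integrals are measurable by definition.''
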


\section{Pathwise Reasoning}
\label{sec:pathwise}
First, we consider reasoning about formulas that are true along every path.
\begin{definition}[Pathwise Validity]
A formula $\phi$ is valid under a countable set of stop-times $T$ if  $\forall I \in \mathcal{I}~st.~I_{times}=T, \forall v \in \Rstate, \forall \omega \in \Omega. Iv\omega\sem{\phi}= \oplus$. Then we write $T\vDash \phi$. If $T$ has the property that $T' \supseteq T \rightarrow T' \vDash \phi$, we write $T\Vvdash \phi$. $\phi$ is pathwise valid if $\exists T~st.~ T\Vvdash \phi$. 
\end{definition}

Where a formula appears on a proof tree, it should be interpreted as the assertion that the formula is pathwise valid.

\subsection{Proof Rules}
Note that that if $\phi_1,~ \phi_2$ are valid, there exist $T_1,~T_2$ st $T_1 \Vvdash \phi_1$, $T_2 \Vvdash \phi_2$, so $T_1 \cup T_2 \Vvdash \phi_1$ and $T_1 \cup T_2 \Vvdash \phi_2$ 

The following proof rules are easily provable valid by inspection of the formula semantics, considering the consequent in the union of the two antecedant's stop times.\vspace{-0.6\baselineskip}
\begin{center}
\AxiomC{$\phi_1$}
\AxiomC{$\phi_2$}
\LeftLabel{AND-ELIM}
\BinaryInfC{$\phi_1 \wedge \phi_2$}
\DisplayProof~~~~~
\AxiomC{$\phi_1$}
\AxiomC{$\phi_1 \rightarrow \phi_2$}
\LeftLabel{MP}
\BinaryInfC{$\phi_2$}
\DisplayProof
\end{center}

As $\phi_1 \leftrightarrow \phi_2$ is valid if and only if they have the same semantics for every $I,v,\omega$, we have a syntactic substitution rule\vspace{-0.3\baselineskip}
\begin{prooftree}
\AxiomC{$\phi_1 \leftrightarrow\phi_2$}
\LeftLabel{IFF-SUB}
\UnaryInfC{$\phi_3 \leftrightarrow \phi_3[\phi_1 / \phi_2]$}
\end{prooftree}
Note that one of the usual proof rules of dynamic logic, the rule ``G'' that says we may derive the validity of $[\alpha]\phi$ from that of $\phi$, is not sound in its usual form here, because $\alpha$ may be a failing program. Instead, let $\crashf (\alpha)$ be the formula $\text{ind}([\alpha]0 \geq 0)$, which has semantics of $\oplus$ if $\alpha$ may transition to $\triangledown$ and $\ominus$ otherwise. 
\begin{prooftree}
\AxiomC{$\phi$}
\LeftLabel{G}\
\UnaryInfC{$\crashf(\alpha)\vee [\alpha]\phi $}
\end{prooftree}
In addition, we will use a uniform substitution proof rule to instantiate axioms under some substitution $\sigma$. We present and justify this rule in Section \ref{sec:us}.

\section{Axioms}
\label{sec:axioms}
We present an axiom schema for pathwise validity. Below, each $\theta,\phi,\alpha$ should be interpreted as a (term, formula, program) symbol of arity 0. Where the soundness proofs follow trivially from the defined semantics, we have omitted them (this is most cases).

\subsection{Pathwise Axioms for Formulas}
We start with the axioms that manipulate formulas. 

\noindent\begin{tabular}{@{}p{0.5\textwidth}@{}p{0.5\textwidth}@{}}
\noindent\textbf{Id}
\begin{compactitem}
\item $\phi \leftrightarrow \phi$
\end{compactitem}
\textbf{Higher Order Equality}
\begin{compactitem}
\item $(\phi_1 \leftrightarrow \phi_2)\leftrightarrow (\phi_2 \leftrightarrow \phi_1)$
\item $(\phi_1\leftrightarrow\phi_2)\leftrightarrow (\lnot \phi_1\leftrightarrow\lnot \phi_2)$
\end{compactitem}&
\textbf{Conjunction}
\begin{compactitem}
\item $(\phi_1 \wedge \phi_2)\leftrightarrow(\phi_2 \wedge \phi_1)$
\item $\phi \leftrightarrow (\phi \wedge \phi)$
\item $((\phi_1 \wedge \phi_2)\wedge \phi_3) \leftrightarrow (\phi_1 \wedge (\phi_2\wedge \phi_3))$
\item $\phi_1 \wedge \phi_2 \rightarrow \phi_1$
\end{compactitem}
\end{tabular}\vspace{-1\baselineskip}

Note that double negation elimination is valid, but the law of the excluded middle is not, unless we are dealing with sure quantities.

\noindent\begin{tabular}{@{}p{0.5\textwidth}@{}p{0.5\textwidth}@{}}
\noindent\textbf{Double Negation Elimination}
\begin{compactitem}
\item $\lnot \lnot \phi \leftrightarrow \phi$
\end{compactitem}&
\noindent\textbf{Excluded Middle of Sureness}
\begin{compactitem}
\item $ \suref (\phi) \vee \lnot \suref (\phi)$
\end{compactitem}
\end{tabular}\vspace{-1\baselineskip}

\noindent\textbf{Sureness}\vspace{-0.5\baselineskip}

\noindent\begin{tabular}{@{}p{0.5\textwidth}@{}p{0.5\textwidth}@{}}
\begin{compactitem}
\item $\suref (\phi) \leftrightarrow \suref(\suref(\phi))$
\item $\suref(\phi) \rightarrow \phi$
\item $\suref(\phi) \rightarrow (\phi_2 \rightarrow \phi_1 \wedge \phi_2)$ 
\end{compactitem}
&
\begin{compactitem}
\item $(\lnot \suref(\phi) \leftrightarrow \lnot \phi \vee \text{ind}(\phi))$
\item $\suref(\phi_1 \wedge \phi_2) \leftrightarrow \suref(\phi_1) \wedge \suref (\phi_2)$
\end{compactitem}
\end{tabular}\vspace{-1\baselineskip}

\subsection{Pathwise Axioms for Terms}
All formulas that are valid in the propositional theory of real closed fields are valid here, so we can include as axioms any axiomatization of that theory. 
Additionally, we have the following axioms for differentiating terms:

	\begin{longtable}{@{}>{$}l<{$}@{~~}>{$}r<{$}}
	d_t c = 0  & (d_t c) \\  
	d_{B,i} c = 0  & (d_B c) \\
	d_t x = x_t  & (d_t x) \\  
	d_{B,i} x = x_{B,i}  & (d_B x) \\
	d_t (f+ g) = d_t f + d_t g& (d_t +) \\
	d_{B,i} (f+ g) = d_{B,i} f + d_{B,i} g& (d_B +) \\
	d_t (f*g) = g * d_t f + f* d_t g + \frac{1}{2}\sum_i d_{B,i} f * d_{B,i} g& (d_t *) \\
	d_{B,i} (f* g) = g* d_{B,i} f + f* d_{B,i} g& (d_B *) \\   	
	\end{longtable}

Finally, we have an axiom for substituting equal terms:
\begin{compactitem}
\item $(\theta_1 = \theta_2)\rightarrow (p_1(\theta_1)\leftrightarrow p_1 (\theta_2))$
\end{compactitem}

\subsection{Pathwise Axioms for Programs}
\noindent\begin{tabular}{@{}p{0.5\textwidth}@{}p{0.5\textwidth}@{}}
\textbf{Skip and Fail}
\begin{compactitem}
\item $\ang{\progskip;\alpha }\phi\leftrightarrow \ang{\alpha }\phi$
\item $\ang{\alpha }\phi \leftrightarrow \ang{\alpha;\progskip }\phi$ 
\item $ \ang{\progskip}\phi \leftrightarrow \phi $
\item $\text{ind}(\ang{\progfail;\alpha}\phi)$
\item $\text{ind}(\ang{\progfail;\alpha}\phi)$
\end{compactitem}&
\textbf{Distributivity}
\begin{compactitem}
\item $(\ang{\alpha }(\phi_1 \vee \phi_2) \leftrightarrow \ang{\alpha }\phi_1 \vee \ang{\alpha}\phi_2  $ 
\end{compactitem}
\textbf{Nondeterministic Choice}
\begin{compactitem}
\item $\ang{\alpha \cup \beta}\phi \leftrightarrow \ang{\alpha}\phi \vee \ang{\beta}\phi$
\end{compactitem}
\end{tabular}\vspace{-1\baselineskip}

\noindent\textbf{Conditionals}
\begin{compactitem}
\item $\ang{\ifthenelse{H}{\alpha}{\beta}}\phi \leftrightarrow (H \wedge (\text{ind}(H)\vee\ang{\alpha}\phi)) \vee (\lnot H \wedge (\text{ind}(H)\vee\ang{\beta}\phi))$
\end{compactitem}

\noindent\begin{tabular}{@{}p{0.6\textwidth}@{}p{0.4\textwidth}@{}}
\textbf{Iteration}
\begin{compactitem}
\item $\ang{\alpha^*}\phi \leftrightarrow \phi \vee \ang{\alpha;\alpha^*}\phi$
\item $[\alpha^*]\suref(\phi \rightarrow [\alpha]\phi) \rightarrow \suref(\phi \rightarrow [\alpha^*]\phi)$
\end{compactitem}&
\textbf{Composition}
\begin{compactitem}
\item $\ang{\alpha;\beta}\phi \leftrightarrow \ang{\alpha}{\ang \beta}\phi $
\end{compactitem}
\end{tabular}\vspace{-1\baselineskip}

The composition one necessitates some justification: 
\begin{lemma}[No Look-Ahead Consumption]\label{no-look}
There exists a natural number $n_C$ such that $\pi_2(I v \sem{\alpha}(\omega,C)) = \tail^{n_C} C$. Furthermore for any $C'$ that agrees with $C$ in the first $n_C$ places, $\pi_1 (I v \sem{\alpha}(\omega,C)) = \pi_1(Iv\sem{\alpha}(\omega,C'))$
\end{lemma}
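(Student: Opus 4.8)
The plan is a structural induction on the program $\alpha$. It is convenient to prove the slightly strengthened statement: there are natural numbers $m_\alpha(C)$ (the number of leading entries of $C$ the evaluation inspects) and $n_\alpha(C) \le m_\alpha(C)$ (the number it consumes) such that $\pi_2(Iv\sem{\alpha}(\omega,C)) = \tail^{n_\alpha(C)} C$, and such that for every $C'$ agreeing with $C$ on its first $m_\alpha(C)$ entries we have both $\pi_1(Iv\sem{\alpha}(\omega,C')) = \pi_1(Iv\sem{\alpha}(\omega,C))$ and $\pi_2(Iv\sem{\alpha}(\omega,C')) = \tail^{n_\alpha(C)} C'$. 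The lemma is the $\pi_1$-clause with the stated $n_C$ read as $m_\alpha(C)$; carrying the $\pi_2$-clause along is what lets the composition step compose. (Strictly, the SDE clause below consumes $nat(C)$ entries but is determined by $nat(C)+1$ of them, so the single $n_C$ of the statement should be read as the larger of the two; equivalently one adopts the harmless convention of also discarding the leading $0$.) Throughout we may assume $v \in \Rstate$, since for $v \in \{\triangledown, \triangle\}$ the semantics returns $(v,C)$ and $m_\alpha(C) = n_\alpha(C) = 0$.

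\emph{Atomic programs.} For $x := \theta$, $x := *$, $\progskip$ and $\progfail$ the output sequence is $C$ itself and $\pi_1$ depends only on $v$, $\omega$ (and, for $x := *$, on a fresh random variable), so $m_\alpha(C) = n_\alpha(C) = 0$. For a program symbol $\gamma$ the statement is exactly the two well-formedness conditions imposed on $I\gamma$ in the definition of $\mathcal I$. For $d\mathbf{x} = \mathbf{b}dt + \mathbf{\sigma}dW ~\&~ H$, the sequence $C$ is used only to form $t = I_{times}(nat(C))$ and to return $\tail(nat(C), C)$; since $nat(C)$ is pinned down by entries $0, \dots, nat(C)$ of $C$, set $m_\alpha(C) := nat(C) + 1$ and $n_\alpha(C) := nat(C)$. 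Any $C'$ agreeing with $C$ on those entries has $nat(C') = nat(C)$, hence the same $t$, and the It\^o process $\int^t$ together with the choice among $\int^t$, $\triangledown$, $\triangle$ depends only on $v$, $\omega$, $t$; both conclusions follow.

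\emph{Compound programs.} For $\ifthenelse{H}{\alpha}{\beta}$ the test $Iz\sem{H}(\omega)$ does not read $C$ (as $H$ has no program subterms), so the evaluation is that of $\alpha$, of $\beta$, or the constant $(\triangledown,C)$, and the induction hypothesis transfers directly. For $\alpha \cup \beta$ one entry of $C$ is read and we delegate to the chosen branch $\delta$ on $\tail(C)$; take $m(C) := 1 + m_\delta(\tail(C))$ and $n(C) := 1 + n_\delta(\tail(C))$, using that agreement on the first $m(C)$ entries preserves $\text{head}(C)$ and makes $\tail(C')$ agree with $\tail(C)$ on its first $m_\delta(\tail(C))$ entries. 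For $\alpha;\beta$, run $\alpha$ from $v$ on $C$, inspecting $m^\alpha := m_\alpha(C)$ and consuming $n^\alpha := n_\alpha(C)$ entries to reach $(v_\alpha, C_\alpha)$ with $C_\alpha = \tail^{n^\alpha}C$, then run $\beta$ from $v_\alpha$ on $C_\alpha$, inspecting $m^\beta := m_\beta(C_\alpha)$ and consuming $n^\beta := n_\beta(C_\alpha)$; put $n(C) := n^\alpha + n^\beta$ and $m(C) := \max(m^\alpha,\, n^\alpha + m^\beta)$. Any $C'$ agreeing on the first $m(C)$ entries agrees on the first $m^\alpha$, so the induction hypothesis for $\alpha$ gives the same $v_\alpha$ and $C'_\alpha = \tail^{n^\alpha}C'$, which then agrees with $C_\alpha$ on its first $m^\beta$ entries, and the induction hypothesis for $\beta$ closes the case. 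Finally $\alpha^*$ unfolds to $Iv\sem{\alpha^{nat(C)}}(\omega, \tail(nat(C),C))$: an inspection of entries $0, \dots, nat(C)$ to fix the repetition count $k := nat(C)$, followed by evaluating the $k$-fold sequential composition $\alpha^k$ on $\tail^k C$, which is handled by a secondary induction on $k$ (base $\alpha^0 = \progskip$; step $\alpha^{k+1} = \alpha;\alpha^k$, using the composition case above and the induction hypothesis for $\alpha$); the two finite look-aheads combine into finite $m(C)$, $n(C)$ for $\alpha^*$.

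\emph{Main obstacle.} The only subtle case is $\alpha^*$, because a priori the number of repetitions is unbounded: one must note that for the \emph{fixed} argument $C$ the count $nat(C)$ is some concrete $k < \infty$, that each of the $k$ sub-runs of $\alpha$ inspects only finitely many entries, and that their finitely many look-ahead windows — together with the window used to read $nat(C)$ itself — lie inside a single finite prefix of $C$, which must then be respected by $C'$. The nested induction (structural on $\alpha$, then on the iteration count) and the bookkeeping separating ``inspected'' from ``consumed'' entries carry all the content; each individual clause is immediate from the corresponding denotational definition.
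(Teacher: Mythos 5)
Your proof is correct and takes the same route as the paper, whose entire proof is the single line ``by structural induction on programs''; your strengthened induction hypothesis separating the \emph{inspected} prefix $m_\alpha(C)$ from the \emph{consumed} prefix $n_\alpha(C)$ supplies exactly the bookkeeping that induction needs, in particular for the composition and iteration cases. Your observation that the SDE and $\alpha^*$ clauses inspect $nat(C)+1$ entries while consuming only $nat(C)$ identifies a real (if minor) imprecision in the lemma as literally stated, and your repair of it is the right one.
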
 

\begin{proof}
By structural induction on programs.
\end{proof}
\begin{lemma} [Compositionality of Supremum Semantics]
\label{compositions}
$$Iv \sem{\ang{\alpha;\beta}\phi}(\omega) = Iv\sem{\ang{\alpha}\ang{\beta}\phi}(\omega)$$
\end{lemma}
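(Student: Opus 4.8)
The plan is to unfold the formula semantics of both sides and reduce the identity to a statement about the program semantics, where the key ingredient is the ``no look-ahead consumption'' property of Lemma~\ref{no-look}. Starting from the right-hand side, by the definition of the modality semantics, $Iv\sem{\ang{\alpha}\ang{\beta}\phi}(\omega)$ is the supremum over all choice sequences $C$ with $\pi_1(Iv\sem{\alpha}(\omega,C)) \neq \triangle$ of the quantity $Iv_C\sem{\ang{\beta}\phi}(\omega)$, where $v_C = \pi_1(Iv\sem{\alpha}(\omega,C))$. But the latter modality semantics, evaluated at state $v_C$, is again a supremum: it is $\sup$ over $C'$ with $\pi_1(Iv_C\sem{\beta}(\omega,C'))\neq\triangle$ of $Iv'_{C'}\sem{\phi}(\omega)$. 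Unfolding the left-hand side, $Iv\sem{\ang{\alpha;\beta}\phi}(\omega)$ is a single supremum over choice sequences $D$ with $\pi_1(Iv\sem{\alpha;\beta}(\omega,D)) \neq \triangle$ of $Iw_D\sem{\phi}(\omega)$, where $w_D = \pi_1(Iv\sem{\alpha;\beta}(\omega,D))$, and by the definition of sequential composition $Iv\sem{\alpha;\beta}(\omega,D) = Iv_\alpha\sem{\beta}(\omega,C_\alpha)$ with $(v_\alpha,C_\alpha) = Iv\sem{\alpha}(\omega,D)$. So the task is to show these two suprema — a ``nested'' double supremum on the right and a ``single'' supremum over composite runs on the left — range over the same set of values in $\mathcal{L}$.

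The main work, then, is to set up a value-preserving correspondence between the index sets of the two suprema. In the $\geq$ direction (left dominated by right): given a choice sequence $D$ contributing to the left-hand supremum, set $(v_\alpha,C_\alpha) = Iv\sem{\alpha}(\omega,D)$; then $D$ witnesses a contribution $C := D$ to the outer supremum on the right with $v_C = v_\alpha$, and $C_\alpha$ witnesses a contribution to the inner supremum, giving the same final state $w_D$ and hence the same value of $Iw_D\sem{\phi}(\omega)$. One must check the non-$\triangle$ side conditions transfer: if the composite run avoids $\triangle$ then so does its $\alpha$-prefix (since $\triangle$ is absorbing under the semantics, $I\triangle\sem{\beta} = \triangle$), and the $\beta$-suffix avoids $\triangle$ too. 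In the $\leq$ direction (right dominated by left): given $C$ contributing to the outer supremum, with $(v_\alpha,C_\alpha) = Iv\sem{\alpha}(\omega,C)$, and given $C'$ contributing to the inner supremum over $\beta$ from state $v_\alpha$, we must produce a single $D$ contributing to the left supremum with $w_D$ equal to the final state. This is where Lemma~\ref{no-look} is essential: by no look-ahead, $\alpha$ consumes only a finite prefix of $C$ of length $n_C$ and returns $C_\alpha = \tail^{n_C}C$; we build $D$ by taking the first $n_C$ bits of $C$ and then splicing in $C'$ afterwards. Then, again by no look-ahead, $\pi_1(Iv\sem{\alpha}(\omega,D)) = \pi_1(Iv\sem{\alpha}(\omega,C)) = v_\alpha$ because $D$ agrees with $C$ on the first $n_C$ bits, and the residual choice sequence returned is exactly $\tail^{n_C}D = C'$, so the $\beta$-phase proceeds from $v_\alpha$ on $C'$ exactly as in the inner supremum. (One should note $D\in\mathcal{C}$, i.e.\ it is still co-finitely zero, since it is a finite prefix followed by a co-finitely-zero sequence.) The side conditions again transfer using that $\triangle$ is absorbing.

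Putting the two inequalities together gives equality of the suprema, hence of the two formula semantics, for every $I$, $v$, $\omega$. I would also remark that the degenerate cases are automatically handled: if $v\in\{\triangledown,\triangle\}$ both sides are $\oslash$ by the convention $I\triangledown\sem{\psi} = I\triangle\sem{\psi} = \oslash$ and the fact that $\triangledown,\triangle$ are absorbing for programs; and the supremum over an empty set (every run of $\alpha$, or every run of $\beta$ from the relevant state, hits $\triangle$) behaves consistently on both sides, again because reaching $\triangle$ in $\alpha$ forces $\triangle$ in $\alpha;\beta$.

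The step I expect to be the main obstacle is the $\leq$ direction — specifically, verifying that the spliced sequence $D$ really does reproduce the same $\alpha$-output state and the same residual choice sequence. This is exactly what Lemma~\ref{no-look} is designed to deliver, so the obstacle is less a mathematical difficulty than a matter of carefully matching the quantifier structure of the nested suprema to the prefix-splicing construction; the one genuine subtlety is making sure the side-condition ``$v_C \neq \triangle$'' on the outer supremum together with ``$\pi_1 \neq \triangle$'' on the inner one is equivalent to the single side-condition ``$\pi_1(Iv\sem{\alpha;\beta}(\omega,D))\neq\triangle$'' on the left, which follows from $\triangle$ (and $\triangledown$) being absorbing states for all programs.
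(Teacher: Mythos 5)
Your proof is correct and follows essentially the same route as the paper's: unfold both modalities into (nested) suprema over choice sequences and use Lemma~\ref{no-look} to identify the double supremum with the single supremum over composite runs. The paper's own proof is a terse two-line computation that leaves the prefix-splicing correspondence and the $\triangle$/$\triangledown$ side-condition bookkeeping implicit; you have simply made those steps explicit.
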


\begin{proof}
$Iv\sem{\ang{\alpha}\ang{\beta}\phi}(\omega) = \letin{v_1 = \pi_1 (I v \sem{\alpha}(\omega,C_1))}\max_{C_1} I v_1\sem{ \ang{\beta}\phi}(\omega) =\\ ~~ \letin{v_1 = \pi_1 (I v \sem{\alpha}(\omega,C_1))}\\~~~~~ \letin{v_2 =\pi_1 (I v_1 \sem{\beta}(\omega,C_2)) }
\max_{C_1} \max_{C_2}I v_2\sem{ \phi}(\omega)
$.
 By Lemma~\ref{no-look}, this $= \letin{v_C = Iv\sem{\alpha;\beta}(\omega,C)}\max_C  v_C \sem{\phi}(\omega) = Iv\sem{\ang{\alpha;\beta}}(\omega)$
\end{proof}

\noindent\textbf{Assignment}
\begin{compactitem}
\item $\deff (\theta) \rightarrow ([x := \theta]p_1(x) \leftrightarrow p_1(\theta) $ 
\end{compactitem}

\subsection{Differential Axioms}
These axioms correspond to the differential axioms of $dL_i$, figure 4 of \cite{Pla17}. The first two correspond to DW and DC, which say that in the absence of crashes, we may move constraints into postconditions and established postconditions into constraints.
  The next two correspond to DE, and say that after an sde is run, it has performed assignments on the differentials of all involved variables.
   The fifth axiom uses the condition that it have no stochastic behavior to ensure it is exactly axiom DI of \cite{Pla17} --- reducing to the case of ODEs. 

\begin{compactitem}
\item $\crashf (d\mathbf{x}=\mathbf{b}dt + \mathbf{\sigma }dW ~\&~ H) \vee [d\mathbf{x}=\mathbf{b}dt + \mathbf{\sigma }dW ~\&~ H]H$
\end{compactitem}
\begin{compactitem}
\item $\suref([d\mathbf{x}=\mathbf{b}dt + \mathbf{\sigma }dW ~\&~ H_1(x)]H_2(x))\rightarrow \\ \left(\begin{array}{@{}l@{}}[d\mathbf{x}=\mathbf{b}dt + \mathbf{\sigma }dW ~\&~ H_1(x)\wedge H_2(x)]H_3(x)\\\leftrightarrow[d\mathbf{x}=\mathbf{b}dt + \mathbf{\sigma }dW ~\&~ H_1(x)]H_3(x)\end{array}\right)$
\item $[d\mathbf{x}=\mathbf{b}dt + \mathbf{\sigma }dW ~\&~ H]d_t \mathbf{x}_i = \mathbf{b}_i$
\item $[d\mathbf{x}=\mathbf{b}dt + \mathbf{\sigma }dW ~\&~ H]d_{B,x_j} \mathbf{x}_i = \mathbf{\sigma}_{i,j}$
\item $[d\mathbf{x}=\mathbf{b}dt + \mathbf{\sigma }dW ~\&~ H]\mathbf{\sigma} = 0 \wedge d_t\theta_1 \geq d_t \theta_2 \rightarrow ([d\mathbf{x}=\mathbf{b}dt + \mathbf{\sigma }dW ~\&~ H]\theta_1 \geq  \theta_2 \iff (H \rightarrow \theta_1 \geq \theta_2)$
\end{compactitem}

\section{Reasoning in Distribution}
\label{sec:distribution}
We have been careful to establish that our semantics is measurable. Now we can reason about real arithmetic extended with terms of the form $P(\phi)$. As before a set of stop times models a formula $\spadesuit$ of this language, $T\vDash \spadesuit$ when for any $I$ st $I_{times}$ = t, $z$ a state such that $\mathcal{P}(z=\triangle) = \mathcal{P}(z=\triangledown) = 0$, the $\spadesuit$ is a true formula of arithmetic under the substitution $P(\phi) \mapsto  \mathcal{P}(Iz\sem{\phi}(\omega)=\oplus)$. Again we say that a set of stop times validates $\spadesuit$, $T\Vvdash \spadesuit$ when $T' \supseteq T \Rightarrow T'\vDash \spadesuit $, and $\spadesuit$ is valid if $\exists T$ that validates it.  

The following proof rules contain assertions about both the validity of formulas in this arithmetic language, as well as in the language of sDL formulas.

Clearly we can take all the axioms of real arithmetic here, as well the following axioms and proof rules whose soundness is self-evident:\vspace{-0.5\baselineskip}

\noindent\begin{tabular}{@{}p{0.3\textwidth}@{}p{0.35\textwidth}@{}p{0.35\textwidth}@{}}
\begin{compactitem}
\item $ P(\phi)\geq 0$
\item $P(\lnot \phi)\leq 1- P(\phi)$
\end{compactitem}&\begin{compactitem}
\item $P(\phi_1 \vee \phi_2 ) \geq P(\phi_1) $
\item $P(\suref(\phi)) = P(\phi)$
\end{compactitem}&
\vspace{0.01\baselineskip}
\centerAlignProof
\AxiomC{$\phi $}
\LeftLabel{VALID-PROB}
\UnaryInfC{$P(\phi)=1$}
\DisplayProof
\end{tabular}\vspace{-1.75\baselineskip}

\begin{prooftree}
\AxiomC{$(\suref(\phi_1) \rightarrow \lnot \suref (\phi_2) ) \wedge (\suref(\phi_2) \rightarrow \lnot \suref (\phi_1) ) $}
\LeftLabel{DISJOINT-PROB}
\UnaryInfC{$P(\phi_1 \vee \phi_2)=P(\phi_1) + P(\phi_2)$}
\end{prooftree}\vspace{-0.5\baselineskip}

The hardest reasoning principals are those that deal with the program modality. 
The following axiom is valid because every time we randomize  a variable, we do so independently.
\begin{compactitem}
\item $P(\ang{x=*}\phi)=\int_0^1 P(<x=s>\phi~ds$. 
\end{compactitem}
Note that integrals aren't a part of our language, so what we really mean is that the left hand side $=c$ for some constant with the side condition that $c$ = the right hand side. In particular, we can derive from this: 
$0\leq c \leq 1 \rightarrow P(\ang{x=*;\ifthenelse{x<c}{\alpha}{\beta}}\phi>= c* P(\ang{\alpha}\phi) + (1-c)*P(\ang{\beta}\phi) $

\subsection{Inequality Axiom for Stochastic Differential Equations}
Theorem 7 of \cite{Pla11} gives a soundness proof of an axiom for reasoning about stochastic differential equations, drawing heavily on \cite[Section 7.2]{Oks92}. We present the same axiom here, modified to fit our syntax (in particular, note that the extra-syntactic construction $\mathcal{L}$ of \cite{Pla11} is handled by our $d_t$ syntax). We omit the soundness proof, as it is essentially unchanged from \cite{Pla11}.
The following is sound under the following conditions: $\forall I,~ \omega, ~\hat{I\theta}$ has compact real support, and $\mathbf{b},\mathbf{\sigma}$ are real and Lipschitz when restricted to $\{v \in \Val| Iv\sem{H}(\omega)=\oplus\}$. \vspace{-0.7\baselineskip}
\begin{prooftree}
\AxiomC{$\ang{\alpha}H \rightarrow (\theta \leq \lambda p)$}
\AxiomC{$H \rightarrow (\theta \geq 0)$}
\AxiomC{$H\rightarrow (d_t \theta \leq 0)$}
\LeftLabel{$(\ang{'})$}
\TrinaryInfC{$P(\ang{\alpha}\ang{d\mathbf{x}=\mathbf{b}dt + \mathbf{\sigma} dW~\& ~H}\theta \geq \lambda)\leq p$}
\end{prooftree}\vspace{-1\baselineskip}

\section{Uniform Substitution}
\label{sec:us}
In order to make use of our axioms, we need to be able to instantiate them into specific formulas we want to reason about by substituting in concrete terms for function, program, and predicate symbols. A substitution $\sigma$ maps each function symbol $f_d$ to some term $\theta$ containing $d$ special 0-ary function symbols $\bullet_0^{\sigma,f_d,i}$ for $i\in 1...d$, no other 0-ary function symbols, and no variables. Similarly it maps $p_d$ to a formula $\phi$ containing $d$ special 0-ary function symbols $\bullet_0^{\sigma,\phi_d,i}$ for $i\in 1...d$, no other 0-ary function symbols, and no variables. It maps $\gamma$ to some program $\alpha$.  We specify a substitution by a list of mappings, with all unlisted symbols mapped to themselves (in the case of function or predicate symbols, themselves applied to their appropriate $\bullet$ symbols). Define the signature of a substitution $\Sigma(\sigma):= $\{symbols that $\sigma$ does not map to themselves\}.

$\sigma$ acts on terms, programs, and formulas (written $[\sigma \tau]$ for $\tau$ a term, program, or formula) by simultaneously and recursively applying all its mappings to produce a new term, program, or formula respectively, as follows: 
	\begin{longtable}[l]{@{}l@{~~~~}r}
	\textbf{Terms} & \\
	$[\sigma c]$ & $c$ \\
	$[\sigma x]$ & $x$ \\
	$[\sigma (\theta * \kappa)]$ & $[\sigma \theta]*[\sigma \kappa]$ \\ 
	$[\sigma (\theta + \kappa)]$ & $[\sigma \theta]+[\sigma \kappa]$ \\ 
	$[\sigma (d_t(\theta))]$ & $d_t([\sigma\theta])$ \\ 
	$[\sigma (d_{B,i}(\theta))]$ & $d_{B,i}([\sigma\theta])$ \\ 
	$[\sigma f_d(\theta_1...\theta_d)]$ & $[\{\forall i \in (1...d) \bullet_0^{\sigma,f_d,i} \mapsto [\sigma \theta_i]\}(\sigma f_d)] $ \\
	$[\sigma (\iota i \phi_d )$ & $\iota i [\sigma \phi_d]$ \\
	\end{longtable}
	\begin{longtable}[l]{@{}l@{~~~~}r}
	\textbf{Programs} & \\
	$[\sigma (x_i := \theta]$ & $x_i :=[\sigma \theta]$ \\
	$[\sigma (x_i := *]$ & $x_i :=* $ \\
	$[\sigma (d\mathbf{x}=\mathbf{b}dt + \mathbf{\sigma }dW ~\&~ H )] $ & $d\mathbf{x}= [\sigma \mathbf{b}] dt + [\sigma \mathbf{\sigma }]dW ~\&~ [\sigma H] ) $ \\
	$[\sigma (\ifthenelse{H}{\alpha}{\beta})] $ & $\ifthenelse{[\sigma H]}{[\sigma \alpha] }{[\sigma \beta]}$ \\
	$[\sigma (\alpha ; \beta)] $ & $[\sigma \alpha];[\sigma \beta]$ \\
	$[\sigma \alpha^*]$ & $[\sigma \alpha] ^ *$ \\
	$[\sigma \gamma] $ & $\sigma \gamma $ \\
	$[\sigma \progfail] $ & $\progfail $\\
		$[\sigma \progskip] $ & $\progskip $\\
	\end{longtable}
	\begin{longtable}[l]{@{}l@{~~~~}r}
	\textbf{Formulas} & \\
	$[\sigma (\theta \geq \kappa)]$ & $[\sigma \theta] \geq [\sigma \kappa]$ \\ 
	$[\sigma (\lnot \phi)]$ & $\lnot [\sigma \phi]$ \\
	$[\sigma (\phi \wedge \psi)]$ & $[\sigma \phi] \wedge [\sigma \psi] $\\
	$[\sigma (p_d(\phi_1,...\phi_d))]$ & $[\{\forall i \in (1...d) \bullet_0^{\sigma,\phi_d,i} \mapsto [\sigma \phi_i]\}(\sigma p_d)]$  \\
	$[\sigma \suref(\phi)]$ & $\suref([\sigma \phi])$ \\
	$[\sigma \ang{\alpha}\phi]$ & $\ang{[\sigma \alpha]}[\sigma \phi]$ 
	\end{longtable}

$\sigma$ also acts on formulas in our probability meta-language; let $[\sigma \spadesuit]$ be $\spadesuit$ with $P([\sigma \phi])$ substituted for $P(\phi)$.

Then instantiating axioms is done via the following rules, with the admissibility side-condition to be specified later: \vspace{-0.75\baselineskip}

\begin{prooftree} 
\AxiomC{$ \phi$}
\LeftLabel{US}
\RightLabel{$\sigma$ is $\phi$-admissible}
\UnaryInfC{$\sigma[\phi]$}
\end{prooftree}\vspace{-1\baselineskip}
\begin{prooftree} 
\AxiomC{$ \spadesuit$}
\LeftLabel{USP}
\RightLabel{$\sigma$ is $\phi$-admissible for all $P(\phi)$ appearing in $\spadesuit$}
\UnaryInfC{[$\sigma \spadesuit]$}
\end{prooftree}\vspace{-0.75\baselineskip}

Uniform substitution was introduced to dDL in \cite{Pla17}, where one of the key insights was that the admissibility of a substitution on an expression can be related to its ``static semantics'' --- which variables it depends on, and which its programs may modify. Throughout this section, we follow the development and strategies of that paper, but apply them to our sDL. 

\xsubsubsection{Read and Write Variables}
 For a value $v\in \Rstate $ and $V' \subset V$, let $v^{V'}$ be its projection onto the subspace without $V'$. Let $v^x$ be shorthand for $v^{\{x\}}$. Let $v(x)$ be the projection onto only $x$.
\begin{definition}
\begin{align*}
	\mathrm{RV}(\theta) := & \{x\in V | \exists I,v_1,v_2.~v_1^x = v_2^x, Iv_1\sem{\theta} \neq Iv_2\sem{\theta} \} \\
	\mathrm{RV}(\alpha) := & \left\{\begin{array}{@{}l@{}}x\in V | \exists I,v_1,v_2,v_1',\omega,C. ~ v_1^x = v_2^x,Iv_1\sem{\alpha}(\omega,C)=v_1',\\~~~(\nexists v_2'.~ v_1'^x = v_2'^x \wedge Iv_2\sem{\alpha}(\omega,C)=v_2' )\end{array}\right\} \\
	\mathrm{RV}(\phi):= &  \{x \in V| \exists I,v_1,v_2,\omega.~ v_1^x = v_2^x, Iv_1\sem{\phi}(\omega) \neq Iv_2\sem{\phi}(\omega)  \} \\
	\mathrm{WV}(\alpha):= & \{ x\in V| \exists I, v_1,v_2,\omega,C.~ Iv_1\sem{\alpha}(\omega,C)=v_2,~v_1(x)\neq v_2(x))  \} \\
	\mathrm{WV}(\phi):= & \cup_{\alpha \text{ a subexpression of }\phi}\mathrm{WV}(\phi) \\
\end{align*}
\end{definition}

\begin{definition}
The signature $\Sigma$ of a term, program, or formula is set of function, predicate, program symbols it contains.
\end{definition}

\subsubsection{Admissibility Condition}
Define the read variables introduced by $\sigma$ to a program or formula $e$, $\mathrm{RV}(\sigma,e):= \cup_{s \in \Sigma(\sigma)\cap \Sigma(e)}\mathrm{RV}([\sigma s])$.

\begin{definition}
$\sigma$ is defined admissible for $e$ when for any subexpression $e'$ of $e$, $\mathrm{RV}(\sigma,e)\cap \mathrm{WV}([\sigma e'])=\emptyset$.
\end{definition}
\xsubsubsection{Syntactic Approximations}
Note that along the lines of the original US paper \cite{Pla17}, we can syntactically compute over-approximations of the sets of read and write variables. Thus in some cases we can prove that substitutions are admissible solely syntactically. 

\subsection{Soundness}

\subsubsection{Adjoint Substitutions}
For $q \in \Rbot^d$, let $I_{\sigma,f_d}^q$ be the interpretation that is the same as I, but with $\bullet_0^{\sigma,f_d,i}$ mapped to $q(i)$ for all $i$, and let $I_{\sigma,p_d}^q$ be the interpretation that is the same as I but with $\bullet_0^{\sigma,p_d,i}$ mapped to $q(i)$. 

\begin{definition}{Substitution Adjoint} \label{substitution_adjoint}
The \textit{adjoint} to substitution $\sigma$ is the operation that maps $I,v$ to the adjoint interpretation $ \sigma(I,v) $.
	\begin{align*}
	\sigma(I,v) (f_d) = \lambda q: \Rbot^d. I_{\sigma,f_d}^q v \sem{\sigma f_d} \\
	\sigma(I,v) (p_d) =  \lambda q: \Rbot^d.\lambda \omega. I_{\sigma,\phi_d}^q v \sem{\sigma p_d} (\omega)\\
	\sigma(I,v) (\gamma) = \lambda v',\omega',C'. Iv'\sem{[\sigma \gamma]}(\omega',C')\\
	\sigma(I,v)_{times} = I_{times}
	\end{align*} 
\end{definition}
Note that $\sigma(I,v)(\gamma)$  does not depend on $v$. The following corollary follows directly from the above definitions.

	\begin{corollary}{(Equal Adjoint Interpretations)} \label{admissible_adjoints}
	If for some $e$ a term,program,or formula symbol $v_1 = v_2 $ on $\mathrm{RV}(\sigma,e)$, then $\sigma(I,v_1)(e) = \sigma(I,v_2)(e)$. Therefore for any program or formula $e$, if $v_1 = v_2$ on $\mathrm{RV}(\sigma,e)$, $e$ has the same semantics under the interpretations $\sigma(I,v_2)$ and $\sigma(I,v_1)$.
	\end{corollary}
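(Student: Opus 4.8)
The plan is to prove the first (bare-symbol) clause by a case split on the kind of symbol $e$, and then obtain the general clause from it by an interpretation-coincidence argument. For the symbol clause: if $e$ is a program symbol $\gamma$, the observation recorded just before the corollary --- that $\sigma(I,v)(\gamma)=\lambda v',\omega',C'.~Iv'\sem{[\sigma\gamma]}(\omega',C')$ does not mention $v$ --- gives $\sigma(I,v_1)(\gamma)=\sigma(I,v_2)(\gamma)$ outright. If $e$ is a function symbol $f_d\notin\Sigma(\sigma)$ then $\sigma f_d$ is $f_d$ applied to its $\bullet$-symbols, so $I^{q}_{\sigma,f_d}v\sem{\sigma f_d}=If_d(q(1),\ldots,q(d))$ is again independent of $v$, and dually for a predicate symbol $p_d\notin\Sigma(\sigma)$. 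The case with actual content is $e=f_d\in\Sigma(\sigma)$ (and symmetrically $e=p_d\in\Sigma(\sigma)$): here $\mathrm{RV}(\sigma,f_d)=\mathrm{RV}([\sigma f_d])$ is the read-variable set of the replacement term and $\sigma(I,v)(f_d)=\lambda q.~I^{q}_{\sigma,f_d}v\sem{\sigma f_d}$, so fixing $q$ and writing $J:=I^{q}_{\sigma,f_d}$ it suffices to show $Jv_1\sem{\sigma f_d}=Jv_2\sem{\sigma f_d}$ whenever $v_1$ and $v_2$ agree on $\mathrm{RV}([\sigma f_d])$. Since states are $\bot$ outside a finite set, $v_1$ and $v_2$ differ in only finitely many coordinates, all outside $\mathrm{RV}([\sigma f_d])$, so we can pass from $v_1$ to $v_2$ by finitely many single-coordinate updates, each of which preserves $\sem{\sigma f_d}$ directly by the definition of $\mathrm{RV}(\theta)$; the predicate case is identical, using $\mathrm{RV}(\phi)$ (whose existential over $\omega$ delivers the pointwise agreement $Jv_1\sem{\sigma p_d}(\omega)=Jv_2\sem{\sigma p_d}(\omega)$ for every $\omega$).

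For the general clause, fix a program or formula $e$ with $v_1=v_2$ on $\mathrm{RV}(\sigma,e)$ and put $J_k:=\sigma(I,v_k)$. By Definition~\ref{substitution_adjoint}, $J_1$ and $J_2$ have the same stopping-time set (both equal $I_{times}$) and assign every symbol $s\notin\Sigma(\sigma)$ its $I$-meaning, hence agree there. For $s\in\Sigma(\sigma)\cap\Sigma(e)$ we have $\mathrm{RV}([\sigma s])\subseteq\mathrm{RV}(\sigma,e)$ by the definition of $\mathrm{RV}(\sigma,\cdot)$, so $v_1=v_2$ on $\mathrm{RV}(\sigma,s)=\mathrm{RV}([\sigma s])$ and the symbol clause gives $J_1(s)=J_2(s)$. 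Thus $J_1$ and $J_2$ agree on all symbols occurring in $e$ and on $I_{times}$, and a routine structural induction on $e$ concludes: every semantic clause consults the interpretation only through $If_d$, $Ip_d$, $I\gamma$ for symbols syntactically present in $e$ and through $I_{times}$ in the stochastic-differential-equation clause, while subexpressions agree by the induction hypothesis.

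The one step that is not bookkeeping is the coincidence property of $\mathrm{RV}$ for terms and formulas used above --- that agreeing on the whole set $\mathrm{RV}(\theta)$ (resp.\ $\mathrm{RV}(\phi)$), not merely in a single coordinate, forces equal semantics. The apparent difficulty is that $\mathrm{RV}$ is defined one variable at a time whereas $d_t(\theta)$, $d_{B,x}(\theta)$ and definite descriptions syntactically range over all of $V$; this is exactly what the finite-support observation handles, reducing the move from $v_1$ to $v_2$ to finitely many one-coordinate swaps. If the appendix already contains this coincidence lemma for terms and formulas, the argument simply cites it; otherwise it is a short induction. Everything else follows mechanically from Definition~\ref{substitution_adjoint} and the definitions of $\mathrm{RV}$ and $\mathrm{RV}(\sigma,\cdot)$.
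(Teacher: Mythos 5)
Your proof is correct, and it takes the only natural route: the paper offers no argument for this corollary beyond the remark that it ``follows directly from the above definitions,'' and your case split on the kind of symbol together with the symbol-agreement induction is exactly what that remark elides. In particular, you correctly identify and supply the one step that is not pure unfolding --- that agreement on all of $\mathrm{RV}$ forces equal semantics even though $\mathrm{RV}$ is defined via single-coordinate perturbations --- via the finite-support, one-coordinate-at-a-time swap argument, which is sound because states in $\Rstate$ differ in only finitely many places.
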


The following Lemmas are proven in Appendix~\ref{sec:appendix}.

\begin{lemma}{Uniform substitution for terms.} \label{uniform_terms}
For all $I,v$,\ \ $Iv\sem{[\sigma \theta]} = \sigma(I,v)v \sem{\theta}$
\end{lemma}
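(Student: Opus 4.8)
The plan is to prove Lemma~\ref{uniform_terms} by structural induction on the term $\theta$, simultaneously with the analogous statements for formulas and programs (since terms, programs, and formulas are mutually recursive, and the definite-description case of a term refers to a formula). I will state the induction hypothesis as: for all $I,v$, $Iv\sem{[\sigma\theta]} = \sigma(I,v)v\sem{\theta}$, and correspondingly $Iv\sem{[\sigma\phi]}(\omega) = \sigma(I,v)v\sem{\phi}(\omega)$ and $Iv\sem{[\sigma\alpha]}(\omega,C) = \sigma(I,v)v\sem{\alpha}(\omega,C)$, the latter two to be discharged in the companion lemmas but available here for the $\iota i\phi_d$ case.

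First I would handle the base and easy compositional cases. For constants $c$ and variables $x$, $[\sigma c] = c$ and $[\sigma x] = x$, and neither the semantics of $c$ nor of $x$ depends on the interpretation of any symbol in $\Sigma(\sigma)$, so both sides agree immediately by unfolding the definitions of the term semantics and of $\sigma(I,v)$. For $\theta * \kappa$ and $\theta + \kappa$, the semantics is defined pointwise by $*$ and $+$ on $\Rbot$ and commutes with $[\sigma\cdot]$, so the claim follows directly from the two inductive hypotheses on $\theta$ and $\kappa$. The differential cases $d_t(\theta)$ and $d_{B,i}(\theta)$ are slightly more delicate: $\Can(I,d_t\theta)$ is built from partial derivatives of $\hat{I\theta}$, and $[\sigma(d_t\theta)] = d_t([\sigma\theta])$, so I need $\widehat{I[\sigma\theta]} = \widehat{\sigma(I,v)\theta}$ as functions on $\Rstate$ — but the inductive hypothesis gives exactly this, pointwise in $\boldx$, and since $\sigma$ introduces no variables, $\sigma(I,\boldx)$ does not vary with $\boldx$ in a way that interferes with differentiation. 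Hence the candidate functions coincide, and therefore so do their measurable-or-$\bot$ truncations, giving equality of the two sides.

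The function-symbol case $f_d(\theta_1,\ldots,\theta_d)$ is the crux and the step I expect to be the main obstacle. Here $[\sigma f_d(\theta_1,\ldots,\theta_d)] = [\{\bullet_0^{\sigma,f_d,i}\mapsto[\sigma\theta_i]\}(\sigma f_d)]$, and on the right-hand side $\sigma(I,v)(f_d) = \lambda q.\, I_{\sigma,f_d}^q v\sem{\sigma f_d}$ applied to the tuple $(\sigma(I,v)v\sem{\theta_1},\ldots)$, which by the inductive hypotheses equals the tuple $(Iv\sem{[\sigma\theta_1]},\ldots)$. So I must show that substituting the terms $[\sigma\theta_i]$ for the markers $\bullet_0^{\sigma,f_d,i}$ inside $\sigma f_d$ and then taking semantics under $I,v$ is the same as taking the semantics of $\sigma f_d$ under the interpretation $I_{\sigma,f_d}^q$ with $q(i) = Iv\sem{[\sigma\theta_i]}$. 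This is itself a substitution lemma for the special $0$-ary markers — a "coincidence/substitution for constants" statement — which I would prove as a small auxiliary induction on the structure of $\sigma f_d$, using that $\sigma f_d$ contains no variables and no other $0$-ary function symbols, so the only place $I$ and $I_{\sigma,f_d}^q$ can differ is at the markers, which have been replaced precisely by terms with the matching value. I would either factor this out as a preliminary lemma or inline it; either way, the bookkeeping about which markers belong to which symbol, and the fact that $[\sigma\theta_i]$ contains no variables so it can be freely moved under $I_{\sigma,f_d}^q$, is where care is needed.

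Finally, for the definite description $\iota i\phi_d$ with $[\sigma(\iota i\phi_d)] = \iota i[\sigma\phi_d]$: the candidate semantics picks out $\mathbf{y}(\diamond^{[\phi_d],i})$ for the unique $\mathbf{y}$ agreeing with $\boldx$ off the $\diamond$-variables and satisfying $I\mathbf{y}\sem{\phi_d} = \oplus$ everywhere. Using the companion inductive hypothesis for formulas, $I\mathbf{y}\sem{[\sigma\phi_d]}(\omega) = \sigma(I,\mathbf{y})\mathbf{y}\sem{\phi_d}(\omega)$; I also need that $\sigma(I,\mathbf{y})$ restricted to the signature of $\phi_d$ does not depend on the $\diamond$-components of $\mathbf{y}$, which holds because $\sigma$ introduces no variables and because the $\diamond$-variables are fresh and so lie outside $\mathrm{RV}(\sigma,\phi_d)$ (appealing to Corollary~\ref{admissible_adjoints}). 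Hence the existence-and-uniqueness condition and the selected value transfer between the two sides, and the measurability truncation is applied identically on both sides, completing the case and the induction.
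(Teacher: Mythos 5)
Your proof follows the same overall route as the paper's: a mutual structural induction on terms, programs, and formulas, with the base and compositional cases dispatched exactly as you describe, and with the function-symbol case correctly identified as the crux. Where you differ is in how that crux is discharged. You propose a separate auxiliary ``substitution for the $0$-ary markers'' lemma, proved by its own induction on the structure of $\sigma f_d$, showing that replacing each $\bullet_0^{\sigma,f_d,i}$ by $[\sigma\theta_i]$ and evaluating under $I,v$ agrees with evaluating $\sigma f_d$ under $I^q_{\sigma,f_d}$ with $q(i)=Iv\sem{[\sigma\theta_i]}$. The paper instead observes that the marker replacement $\{\bullet_0^{\sigma,f_d,i}\mapsto[\sigma\theta_i]\}$ is itself a uniform substitution, places a well-founded order on substitutions under which it is strictly smaller than $\sigma$, and strengthens the induction to run simultaneously on that order; the needed fact is then just another instance of Lemma~\ref{uniform_terms} itself. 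The two devices establish the same thing: the paper's avoids stating a second lemma at the price of a more elaborate induction measure, while yours is the more conventional decomposition. One correction to your justification of the auxiliary step: $[\sigma\theta_i]$ generally \emph{does} contain variables (the arguments $\theta_i$ are arbitrary terms, e.g.\ $f_1(x)$), so you cannot appeal to their being variable-free. What actually makes the step work is that $\sigma f_d$ itself contains no variables and that terms contain no programs, so every occurrence of a marker is evaluated at the same state $v$ as the top level (with the usual caveat for the fresh $\diamond$ variables inside definite descriptions), and hence each substituted occurrence of $[\sigma\theta_i]$ evaluates to exactly $q(i)$. Your treatment of the definite-description case via the formula IH and Corollary~\ref{admissible_adjoints} is sound and in fact more explicit than the paper's one-line remark.
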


\begin{lemma}{Uniform substitution for programs.} \label{uniform_programs}
When $\sigma$ is admissible for $\alpha$, uniform substitution  and its adjoint interpretation have the same semantics for all $I,v,\omega$:
\ \ $Iv\sem{[\sigma \alpha]}(\omega,C) = \sigma(I,v)v \sem{\alpha}(\omega,C)$
\end{lemma}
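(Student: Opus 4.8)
The plan is to prove Lemma~\ref{uniform_programs} by structural induction on the program $\alpha$, simultaneously with (or rather, relying on) the already-established Lemma~\ref{uniform_terms} for terms and the analogous statement for formulas --- so in practice this is one mutual induction over terms, programs, and formulas, and here I describe the program cases. Throughout, I abbreviate $J := \sigma(I,v)$, and I keep track of the subtlety that the adjoint is taken at the \emph{outer} valuation $v$, while the semantics of a program may pass through many intermediate valuations; this is exactly where admissibility is needed. The base-case-like clauses are routine: for $x_i := \theta$ I unfold both sides, use Lemma~\ref{uniform_terms} to identify $Iv\sem{[\sigma\theta]}(\omega) = Jv\sem{\theta}(\omega)$, and observe that updating the valuation commutes with everything; for $x_i := *$, $\progskip$, and $\progfail$ the substitution is the identity on the program and there is nothing to relate beyond the definitions; for $\gamma$, the definition $\sigma(I,v)(\gamma) = \lambda v',\omega',C'.\,Iv'\sem{[\sigma\gamma]}(\omega',C')$ makes the two sides literally equal.

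The inductive clauses for the program constructors are where the admissibility hypothesis does its work. For $\alpha \cup \beta$, $\ifthenelse{H}{\alpha}{\beta}$, and $\alpha^*$, the semantics reduces (after consuming some choice bits) to the semantics of a subprogram at the \emph{same} valuation $v$, so the induction hypothesis applies directly once I note that $\sigma$ admissible for the compound program entails $\sigma$ admissible for each immediate subexpression (and, for the conditional's guard $H$, I invoke the formula version of the lemma). The genuinely delicate case is sequential composition $\alpha;\beta$ (and, relatedly, the unfolding of $\alpha^*$ into $\alpha;\alpha^*$): here $Iv\sem{[\sigma\alpha];[\sigma\beta]}(\omega,C)$ runs $[\sigma\alpha]$ from $v$ to some intermediate $(v_1,C_1)$ and then runs $[\sigma\beta]$ \emph{from $v_1$}, whereas on the semantic side I want $Jv\sem{\alpha;\beta}(\omega,C)$, which runs $\beta$ under the adjoint $J = \sigma(I,v)$ taken at $v$, not at $v_1$. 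By the induction hypothesis on $\alpha$ the intermediate state matches: $v_1 = \pi_1(Jv\sem{\alpha}(\omega,C))$ and $C_1 = \pi_2(\cdots)$. Then the induction hypothesis on $\beta$ gives $Iv_1\sem{[\sigma\beta]}(\omega,C_1) = \sigma(I,v_1)v_1\sem{\beta}(\omega,C_1)$, so I must bridge $\sigma(I,v_1)$ and $\sigma(I,v)$. This is precisely where Corollary~\ref{admissible_adjoints} enters: admissibility of $\sigma$ for $\alpha;\beta$ guarantees $\mathrm{RV}(\sigma,\alpha;\beta) \cap \mathrm{WV}([\sigma\alpha]) = \emptyset$, so $\alpha$'s run cannot have modified any variable that the substituted symbols read, hence $v_1 = v$ on $\mathrm{RV}(\sigma,\beta)$, and therefore $\sigma(I,v_1)$ and $\sigma(I,v)$ assign the same meaning to every symbol in $\Sigma(\beta)$ --- giving $\sigma(I,v_1)v_1\sem{\beta}(\omega,C_1) = \sigma(I,v)v_1\sem{\beta}(\omega,C_1)$, which is exactly the right-hand side of what we want.

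The continuous case $d\mathbf{x}=\mathbf{b}dt+\mathbf{\sigma}dW~\&~H$ is handled by unfolding the (somewhat involved) definition of its semantics: the It\^o integrals depend on $\mathbf{b},\mathbf{\sigma}$ only through their term semantics, which Lemma~\ref{uniform_terms} transports across the adjoint; the guard checks $I{\textstyle\int}^t\sem{H}(\omega)$ are transported by the formula version of the lemma; the stopping time $I_{times}(nat(C))$ is unchanged because $\sigma(I,v)_{times} = I_{times}$ by definition; and the bookkeeping of which choice bits are consumed is identical on both sides. One subtlety to flag here is that the intermediate valuations $\int^{t'}$ arising inside this clause differ from $v$ only on the $\mathbf{x}$ variables and their differentials, and one should check these do not lie in $\mathrm{RV}(\sigma,H)$ or interfere with $\mathrm{RV}(\sigma,\mathbf{b}),\mathrm{RV}(\sigma,\mathbf{\sigma})$ --- but this again follows from admissibility of $\sigma$ for the whole SDE program.

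I expect the main obstacle to be getting the sequential-composition argument airtight: precisely stating the "frame" property that running $[\sigma\alpha]$ does not disturb $\mathrm{RV}(\sigma,\beta)$ (which needs the definition of $\mathrm{WV}$, Lemma~\ref{no-look} to control choice-sequence consumption, and Corollary~\ref{admissible_adjoints}), and then threading it cleanly so that the two adjoint interpretations agree on exactly the symbols that matter. A secondary but real annoyance is that this lemma is not self-contained: it must be proven in a single mutual induction together with the uniform-substitution lemma for formulas (needed for guards $H$ and for the modality $\ang{\alpha}\phi$ inside formulas), so the statement, hypotheses, and admissibility conditions of all three lemmas have to be set up compatibly before the induction begins.
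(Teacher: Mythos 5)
Your proposal is correct and follows essentially the same route as the paper: a mutual structural induction with the term and formula lemmas, with the sequential-composition case resolved exactly as in the paper by using admissibility ($\mathrm{RV}(\sigma,\alpha;\beta)\cap\mathrm{WV}([\sigma\alpha])=\emptyset$) together with Corollary~\ref{admissible_adjoints} to identify $\sigma(I,v_\alpha)$ with $\sigma(I,v)$ on the symbols of $\beta$. The only detail you elide is the paper's auxiliary well-founded order on substitutions used for the symbol cases of the mutual induction, but that does not affect the program cases you describe.
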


\begin{lemma}{Uniform substitution for formulas.} \label{uniform_formulas}
When $\sigma$ is admissible for $\phi$, The uniform substitution $\sigma$ and its adjoint interpretation have the same semantics for all $I,v,\omega$:
\ \ $I v\sem{[\sigma \phi]}(\omega) = \sigma(I,v) v \sem{\phi}(\omega)$
\end{lemma}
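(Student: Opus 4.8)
The plan is to prove all three uniform substitution lemmas (for terms, programs, and formulas) simultaneously by mutual structural induction on the syntactic object, since terms, programs, and formulas are mutually recursive. Lemma~\ref{uniform_formulas} is the formula case of this joint induction; I will assume the term case (Lemma~\ref{uniform_terms}, which needs no admissibility hypothesis) and the program case (Lemma~\ref{uniform_programs}) as available for strictly smaller subexpressions, and likewise the formula case for smaller subexpressions. The key structural fact I will lean on throughout is Corollary~\ref{admissible_adjoints}: whenever two valuations agree on $\mathrm{RV}(\sigma,e)$, the adjoint interpretations $\sigma(I,v_1)$ and $\sigma(I,v_2)$ assign $e$ the same semantics. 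Admissibility of $\sigma$ for $\phi$ is exactly what guarantees that the write variables of any substituted subprogram are disjoint from $\mathrm{RV}(\sigma,\phi)$, so this corollary applies at the points where the valuation changes (i.e. after a program step).

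First I would dispatch the base-case-like clauses. For $\theta \geq \kappa$, unfold both semantics; since $[\sigma(\theta\geq\kappa)] = [\sigma\theta]\geq[\sigma\kappa]$ and the truth value depends only on $Iv\sem{[\sigma\theta]}(\omega)$ and $Iv\sem{[\sigma\kappa]}(\omega)$, apply Lemma~\ref{uniform_terms} to each. For $\lnot\phi$, $\phi\wedge\psi$, and $\suref(\phi)$, the semantics are defined pointwise from the semantics of the immediate subformulas (via $\overline{\cdot}$, $\max$, and the two-case split respectively), so the claim follows from the induction hypothesis for those subformulas, noting that admissibility of $\sigma$ for the compound formula entails admissibility for each subformula (admissibility is a condition quantified over \emph{all} subexpressions, hence monotone under taking subexpressions). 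For the predicate case $p_d(\theta_1,\dots,\theta_d)$, the argument mirrors the function-symbol case of Lemma~\ref{uniform_terms}: unfold $[\sigma p_d(\vec\theta)]$ as the substitution that maps each $\bullet_0^{\sigma,\phi_d,i}$ to $[\sigma\theta_i]$ applied to $\sigma p_d$, compute its semantics under $I,v$, use Lemma~\ref{uniform_terms} to rewrite $Iv\sem{[\sigma\theta_i]} = \sigma(I,v)v\sem{\theta_i}$, and match this against $\sigma(I,v)v\sem{p_d(\vec\theta)}$ by the definition of $\sigma(I,v)(p_d) = \lambda q.\lambda\omega.I^q_{\sigma,\phi_d}v\sem{\sigma p_d}(\omega)$; one must check that replacing $I$ by $I^q_{\sigma,\phi_d}$ with $q_i = \sigma(I,v)v\sem{\theta_i}$ commutes correctly with the inner substitution, which is the analogue of the composition-of-substitutions bookkeeping from the terms lemma.

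The genuinely hard case is the modality $\ang{\alpha}\phi$. Here $[\sigma\ang{\alpha}\phi] = \ang{[\sigma\alpha]}[\sigma\phi]$, and the semantics is $\sup_{C\ \mathrm{st}\ v_C\neq\triangle} Iv_C\sem{[\sigma\phi]}(\omega)$ where $v_C = \pi_1(Iv\sem{[\sigma\alpha]}(\omega,C))$. By Lemma~\ref{uniform_programs} (admissibility of $\sigma$ for $\ang{\alpha}\phi$ gives admissibility for the subprogram $\alpha$), $Iv\sem{[\sigma\alpha]}(\omega,C) = \sigma(I,v)v\sem{\alpha}(\omega,C)$, so $v_C$ is also the post-state of $\alpha$ under the adjoint interpretation. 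It remains to show $Iv_C\sem{[\sigma\phi]}(\omega) = \sigma(I,v_C)v_C\sem{\phi}(\omega)$ \emph{and} that $\sigma(I,v_C) = \sigma(I,v)$ as far as $\phi$'s semantics is concerned — the second point is where admissibility does its real work: $v_C$ and $v$ can differ only on $\mathrm{WV}([\sigma\alpha])$, which admissibility forces to be disjoint from $\mathrm{RV}(\sigma,\ang{\alpha}\phi) \supseteq \mathrm{RV}(\sigma,\phi)$, so Corollary~\ref{admissible_adjoints} yields $\sigma(I,v_C)v_C\sem{\phi}(\omega) = \sigma(I,v)v_C\sem{\phi}(\omega)$... and here I must be careful: the induction hypothesis for $\phi$ gives $Iv_C\sem{[\sigma\phi]}(\omega) = \sigma(I,v_C)v_C\sem{\phi}(\omega)$ directly, so the chain is $Iv_C\sem{[\sigma\phi]}(\omega) = \sigma(I,v_C)v_C\sem{\phi}(\omega)$, and then I need to relate the supremum over $C$ taken with the two versions of the adjoint. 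Since the set $\{C : v_C \neq \triangle\}$ is the same on both sides (it only depends on $\alpha$'s semantics, already matched), and the summand agrees for each $C$, the suprema agree. I expect the main obstacle to be precisely this interplay — keeping straight that the valuation fed to the \emph{adjoint} changes from $v$ to $v_C$ while verifying that admissibility prevents $\mathrm{RV}(\sigma,\phi)$ from being touched, so that the adjoint interpretation's action on $\phi$ is stable; this is the sDL analogue of the corresponding step in~\cite{Pla17}, complicated here by the extra choice-sequence argument $C$ and the three-valued supremum. Once that is handled, the derived $[\alpha]\phi = \lnot\ang{\alpha}\lnot\phi$ case needs nothing new.
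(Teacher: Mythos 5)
Your plan matches the paper's proof essentially step for step: the same mutual structural induction on terms, programs, and formulas, the same dispatch of the propositional cases via admissibility being inherited by subexpressions, the same treatment of $p_d$ mirroring the $f_d$ case, and the same use of Corollary~\ref{admissible_adjoints} together with the disjointness of $\mathrm{WV}([\sigma\alpha])$ from $\mathrm{RV}(\sigma,\phi)$ to stabilize the adjoint across the modality before taking the supremum over $C$. The one detail you wave at as ``composition-of-substitutions bookkeeping'' in the $p_d$ case is made precise in the paper by additionally inducting on a well-founded order on substitutions (the inner substitution mapping the $\bullet_0$ symbols is strictly smaller than $\sigma$), but you correctly identify where that work is needed.
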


\xsubsubsection{US, USP}
US is thus sound: $T\Vvdash \phi$ if and only if for any $I$ with times a superset of $T$, for all $v,\omega$, $Iv\sem{\phi}= \oplus$. If $\sigma$ is admissible for $\phi$ then by theorem \ref{uniform_formulas}, for any $I$ with times a superset of T, $Iv\sem{[\sigma]\phi}(\omega)=\sigma(I,v)v\sem{\phi}(\omega)=\oplus$ as $\sigma(I,v)$ has the same times as $I$. We obtain the soundness of USP similarly.

\section{Conclusion}
\label{sec:conclusion}
	We have given a logic for reasoning about stochastic hybrid programs, with a measurable semantics. To make it suitable for implementation in a practical proof-assistant, we have extended it with definite descriptions and differentials, and given it a proof calculus in a uniform substitution style. 
	\subsection{Future Work}
		We have maintained the measurability of our semantics by adopting an all-or-nothing approach with respect to definite descriptions and differentials, where if the resulting semantics isn't measurable we throw it out. It would be interesting to see if we could handle terms more delicately by only demanding that their semantics be restricted to the contexts of the formulas they appear in. Additionally, sufficient conditions for measurability of partial derivatives are given in \cite{MM77}, and it may be fruitful to incorporate them into our semantics.
	
	We've presented just one proof rule for stochastic differential equations, based on prior work from \cite{Pla11} which leverages Doob's Martingale inequality. There is a rich literature on concentration inequalities for martingales~\cite{howard2020timeuniform,aeckerlewillems2019concentration} as well as inequalities on the solutions of SDEs, such as \cite{compare_sde}. We would like to derive sound proof rules based on these methods, and we believe that the differential terms of our languages will allow us to do so in a way that minimizes the need for semantic side-conditions.

\appendix
\makeatletter\def\@seccntformat#1{\appendixname~\thesection\quad}\makeatother
\section{Proofs}
\label{sec:appendix}

\begin{proof}[Theorem \ref{measure-term}]
 We proceed by structural induction. It is enough that the function $\hat{I \theta}$ be measurable; the intended result then follows from the structure of the $\sigma$-algebra on $\Val$, and composing with $z$ as a measurable function.  
 
The semantics of constants and variables satisfies this immediately. Term symbols, differentials, and definite descriptions do so by definition. Addition and multiplication follow from the inductive hypothesis.

\end{proof}

\begin{proof}[Theorems \ref{measure-formula}, \ref{measure-program}]
We prove these theorems simultaneously by induction on the structure of programs and formulas.

\paragraph{Formulas:}
The measurability of inequalities follows from Theorem~\ref{measure-term}. The cases of negations and conjunctions, and sureness follow from the inductive hypothesis. The interpreted formula symbols are measurable by definition. 

For $\ang{\alpha}\phi$, we have $$\text{let}z_C(\omega) = \pi_1 Iz(\omega)\sem{\alpha}(\omega,C) \text{ in } \sup_{C~st.~z_C(\omega) \neq \triangle} Iz_C(\omega) \sem{\phi}(\omega).$$ By $\ref{measure-program}$ and IH, each $z_C(\omega)$ must be measurable, so by IH so is $Iz_C(\omega) \sem{\phi}(\omega)$. The semantics here are then the pointwise supremum of countably many measurable functions, which is measurable~\cite{MM77}. 

\paragraph{Programs:}
Under a particular choice sequence, $\alpha \cup \beta$ and $\alpha^*$ behave as other programs, so we don't need to consider these cases. $\gamma$ preserves measurability by definition. $\progskip$ acts as the identity transform on states, and $\progfail$ outputs a constant function, so these trivially preserve measurability. That $\alpha;\beta$ preserves measurability follows directly from the IH. Random variables and It\^{o} integrals are measurable by definition. The semantics of assignments and conditionals can be rewritten as compositions of the semantics of terms, formulas, and programs with projections, and thus preserve measurability by the appropriate use of IH and \ref{measure-term}.
\end{proof}

\begin{proof}[Proof of Lemmas \ref{uniform_terms}, \ref{uniform_programs}, \ref{uniform_formulas}]
 We place the following well-founded partial order on substitutions: $\sigma_1 \sqsubseteq \sigma_2$ if $\Sigma(\sigma_1) \subset \textnormal{sym}(\sigma_2)$ or if every element of $\Sigma(\sigma_1)$ is of the form $\bullet_0^{\sigma,\tau_d,i}$ for $\tau_d$ some symbol $f_d$ or $p_d$ in $\textnormal{sym}(\sigma_2)$.
\\ The unique least substitution is then the identity substitution. Observe that from definition \ref{substitution_adjoint}, the lemmas always hold when $\sigma$ is the identity substitution.
\\We proceed by mutual structural induction on terms, programs, and formulas, and simultaneously on the ordering on substitutions. Note that semantics where $v = \triangledown$ doesn't change under reinterpretation, so below we just consider the cases when $v\in \Rstate$.

Let us start by considering terms:
\begin{compactitem}
\item The semantics of constants and variables doesn't change under different interpretations, and don't substitutions don't change them.
\item The cases of additions, multiplications, and derivatives follow immediately by IH.
\item As formulas appearing in definite descriptions contain no programs, they have no write variables, so $\sigma$ is admissible for them and this case also follows by IH.

\item $Iv\sem{[\sigma f_d(\theta_1...\theta_d)]}
= Iv\sem{[\{\forall i \in (1...d) \bullet_0^{\sigma,f_d,i} \mapsto [\sigma \theta_i]\}(\sigma f_d)]}(\omega) \\ 
\stackrel{IH}{=}  \{\forall i \in (1...d) \bullet_0^{\sigma,f_d,i} \mapsto [\sigma \theta_i]\}(I,v)v \sem {\sigma f_d}$ 
\\
Note that $\{\forall i \in (1...d) \bullet_0^{\sigma,f_d,i} \mapsto [\sigma \theta_i]\}(I,v)$ acts the same as $I$ except that it interprets each $\bullet_0^{\sigma,f_d,i}$ as $Iv\sem{[\sigma \theta_i]}$.
So by definition, the above expression $= \sigma(I,v) f_d \left( I v \sem{[\sigma \theta_1]}...I v \sem{[\sigma\theta_d]} \right)
\\
\stackrel{IH}{=} \sigma (I,v) f_d  (\sigma (I,v) v \sem{\theta_1}...\sigma (I,v) v \sem{\theta_d})
=\sigma (I,v) v \sem{f_d(\theta_1...\theta_d)}$
\end{compactitem}
Now consider programs:

\begin{compactitem}
\item $Iv\sem{[\sigma ~ x:=\theta]}(\omega,C) = Iv\sem{x:=[\sigma\theta]}(\omega,C) =\\ \begin{cases} 
\triangledown, C & Iv\sem{[\sigma \theta]} = \bot \\
v[Iv\sem{[\sigma \theta]} / x],C & \text{else}
\end{cases} \\
\stackrel{IH}{=} \begin{cases} 
\triangledown, C & \sigma(I,v) v\sem{\theta} = \bot \\
v[\sigma(I,v)v\sem{\theta} / x],C & \text{else}
\end{cases} = \sigma(I,v)v\sem{x:=\theta}(\omega,C) $  
\item Differential equations follow similarly.
\item The semantics of randomization, $\progskip , ~ \progfail$ doesn't change under interpretation.
\item $I v \sem{[\sigma \gamma]}(\omega,C) = \sigma(I,v) I v \sem{\gamma}(\omega,C)$ by definition.
\item $I v \sem{[\sigma ~\alpha;\beta]}(\omega,C)  = I v \sem{[\sigma \alpha];[\sigma\beta]}(\omega,C) =\\ \letin{ (v_\alpha,C_\alpha) = Iv\sem{[\sigma \alpha]}(\omega,C)}  I v_\alpha \sem {[\sigma \beta]}(\omega,C_\alpha) \stackrel{IH}{=} \\
 \letin{ (v_\alpha,C_\alpha) = \sigma(I,v) v\sem{\alpha}(\omega,C)}  \sigma{I,v_\alpha} \sem {\beta}(\omega,C_\alpha) $. 
 By definition, $v = v_\alpha$ on $\mathrm{WV}([\sigma \alpha])^C$, which $\supseteq \mathrm{RV}(\sigma,\alpha;\beta)$ by admissibility, hence $\supseteq \mathrm{RV}(\sigma,\beta)$ by definition. Then by  cor. \ref{admissible_adjoints}, the above = \\
$ \letin{ (v_\alpha,C_\alpha) = \sigma(I,v) v\sem{\alpha}(\omega,C)}  \sigma{I,v} \sem {\beta}(\omega,C_\alpha) = \sigma(I,v)v\sem{\alpha;\beta}(\omega,C)$
\item Conditional, union, star follow similarly from IH. Star requires a nested induction on n in $\alpha^n$. 
\end{compactitem}

And finally formulas:
\begin{compactitem}
\item For inequalities, this follows from Theorem \ref{uniform_terms} and IH.
\item Since the write variables of $\phi$ are a superset of the write variables of any of its subexpressions, $\sigma$ must be admissible for each of its subexpressions. 
Then we can apply IH for the cases of conjunction, negation, and sureness
\item $Iv\sem{[\sigma p_d(\theta_1...\theta_d)]}(\omega) 
= Iv\sem{[\{\forall i \in (1...d) \bullet_0^{\sigma,p_d,i} \mapsto [\sigma \theta_i]\}(\sigma p_d)]}(\omega) \\ 
\stackrel{IH}{=}  \{\forall i \in (1...d) \bullet_0^{\sigma,p_d,i} \mapsto [\sigma \theta_i]\} (I,v) v \sem {\sigma p_d}(\omega)$\\
Note that $\{\forall i \in (1...d) \bullet_0^{\sigma,p_d,i} \mapsto [\sigma \theta_i]\}(I,v)$ acts the same as $I$ except that it interprets each $\bullet_0^{\sigma,p_d,i}$ as $Iv\sem{[\sigma \theta_i]}(\omega)$.
So by definition, the above expression $= \sigma(I,v)p_d (Iv\sem{[\sigma \theta_1]}, ..... Iv\sem{[\sigma \theta_d]})(\omega).
\\
\stackrel{IH}{=} \sigma(I,v)p_d (\sigma(I,v)v\sem{\theta_1}...\sigma(I,v)v\sem{\theta_d})(\omega))
=\sigma(I,v) \sem{p_d(\theta_1...\theta_d)}(\omega)$
\item $Iv\sem{[\sigma\ang{\alpha}\phi]}(\omega) = Iv\sem{\ang{[\sigma\alpha]}[\sigma\phi]}(\omega) =\\ \letin{v_c = \pi_1 Iv\sem{[\sigma \alpha]}(\omega,C)}\sup_{C~st~v_C\neq \triangle} I v_c \sem{[\sigma \phi]}(\omega) \stackrel{IH}{=}\\ \letin {v_C = \pi_1 \sigma(I,v) v \sem{[\sigma \alpha]}(\omega,C) }\sup_{C~st~v_C\neq \triangle} I v_c \sem{[\sigma \phi]}(\omega) \stackrel{IH}{=}\\
\letin{v_C = \pi_1 \sigma(I,v) v \sem{[\sigma \alpha]}(\omega,C)} \sup_{C~st~v_C\neq \triangle} \sigma(I,v_C)v_C \sem{\phi}(\omega) $.\\
Now by definition, $v_C = v$ on $\mathrm{WV}([\sigma \alpha])^C$, which $\supseteq \mathrm{WV}([\sigma \ang{\alpha}\phi])^C$ by definition, which by admissibility $\supseteq \mathrm{RV}(\sigma,[\sigma \ang{\alpha}\phi])$, which again by definition $\supseteq \mathrm{RV}(\sigma,\phi)$. Hence by corollary \ref{substitution_adjoint}, 
the above term =: \\
$\letin{v_C = \pi_1 \sigma(I,v) v \sem{[\sigma \alpha]}(\omega,C)}\\~~~\sup_{C~st~v_C\neq \triangle} \sigma(I,v)v_C \sem{\phi}(\omega)  = \sigma(I,v) v \sem{<\alpha>\phi}(\omega)$
\end{compactitem}
\end{proof}

\bibliography{../sdl}
\bibliographystyle{splncs04} 

\end{document}